\documentclass[12pt,letterpaper]{article}

\usepackage[utf8]{inputenc}
\usepackage[T1]{fontenc}
\usepackage[left=1.25in,right=1.25in,top=1.25in,bottom=1.25in]{geometry}
\usepackage{amsmath,amssymb,amsfonts,amsthm}
\usepackage{graphicx}
\usepackage{textcomp}
\usepackage[authoryear]{natbib}
\usepackage{enumitem}
\usepackage{float}

\renewcommand{\baselinestretch}{1.30345}
\renewcommand{\footnotesize}{\tiny}
\makeatletter
\long\def\@footnotetext#1{\insert\footins{%
    \reset@font\footnotesize
    \interlinepenalty\interfootnotelinepenalty
    \splittopskip\footnotesep
    \splitmaxdepth \dp\strutbox \floatingpenalty \@MM
    \hsize\columnwidth \@parboxrestore\raggedright
    \protected@edef\@currentlabel{%
       \csname p@footnote\endcsname\@thefnmark
    }%
    \color@begingroup
      \@makefntext{%
        \rule\z@\footnotesep\ignorespaces#1\@finalstrut\strutbox}%
    \color@endgroup}}
\renewcommand\section{\@startsection {section}{1}{\z@}%
  {-2.8ex \@plus -1ex \@minus -.2ex}{1.6ex \@plus.2ex}%
  {\normalfont\Large\bfseries\raggedright\hyphenpenalty=\@M}}
\renewcommand\subsection{\@startsection{subsection}{2}{\z@}%
  {-2.5ex\@plus -1ex \@minus -.2ex}{1.0ex \@plus .2ex}%
  {\normalfont\large\bfseries\raggedright\hyphenpenalty=\@M}}
\renewcommand\subsubsection{\@startsection{subsubsection}{3}{\z@}%
  {-2.5ex\@plus -1ex \@minus -.2ex}{1.0ex \@plus .2ex}%
  {\normalfont\normalsize\bfseries\raggedright\hyphenpenalty=\@M}}
\def\@listi{\leftmargin\leftmargini
  \parsep 2\p@ \@plus1\p@ \@minus1\p@
  \topsep 4\p@ \@plus2\p@ \@minus2\p@
  \itemsep 2\p@ \@plus1\p@ \@minus1\p@}
\let\@listI\@listi
\@listi
\def\thm@space@setup{%
  \thm@preskip=5\p@ \@plus2\p@ \@minus2\p@
  \thm@postskip=\thm@preskip}
\g@addto@macro\normalsize{%
  \abovedisplayskip .05in \@plus2\p@ \@minus2\p@
  \belowdisplayskip .05in \@plus2\p@ \@minus2\p@
  \abovedisplayshortskip \z@ \@plus2\p@
  \belowdisplayshortskip .05in \@plus2\p@ \@minus2\p@}
\makeatother

\theoremstyle{plain}
\newtheorem{theorem}{Theorem}
\newtheorem{corollary}{Corollary}
\newtheorem{proposition}{Proposition}
\newtheorem{lemma}{Lemma}

\theoremstyle{definition}
\newtheorem{definition}{Definition}
\newtheorem{remark}{Remark}
\newcommand{\E}{\mathbb E}
\newcommand{\Prob}{\Pr}

\begin{document}

\title{Outcome Disclosure and Temporal Refinement in Multi-Battle Team Contests}
\author{Bo Chen\thanks{Bo Chen: Shenzhen University, email: bochenbonn@gmail.com.}
\and Rui Gao\thanks{Rui Gao: Shandong Youth University of Political Science, email: gaorui@sdyu.edu.cn.}
\and Jingfeng Lu\thanks{Jingfeng Lu: National University of Singapore, email: ecsljf@nus.edu.sg.}
\and Zhewei Wang\thanks{Zhewei Wang: Shandong University, email: zheweiwang@sdu.edu.cn.}}
\date{August 2026}
\maketitle

\begin{abstract}
\baselineskip=14.5pt
A team-contest designer values output rather than expenditure;
nonlinear conversion makes the distinction consequential. We
study two
non-pecuniary instruments in majority-rule
contests decided by pairwise all-pay battles with private abilities:
disclosing resolved outcomes and splitting the battle schedule into
finer blocks.
Neither changes a battle's average pivotality; each only redistributes
it across histories. Under nested information
structures, this redistribution makes equilibrium ability-scaled
expenditure weakly more dispersed player by player without changing its
mean; expected aggregate expenditure is invariant across all designs
considered.
The resulting convex-order comparison ranks expected total output: convex
output costs favor no disclosure and coarser temporal structures, concave
costs favor full disclosure and finer ones, and linear costs make both
comparisons neutral. The rankings hold for finite-support and smooth
continuous-type ability distributions. The mechanism extends to
degree-zero component contest technologies with a unique equilibrium
outcome distribution. No
and full disclosure bound every admissible public garbling.

\bigskip

\noindent\textbf{Keywords:} Contest design; all-pay auction; team contest; outcome disclosure; temporal refinement.\\
\noindent\textbf{JEL Classification:} C72, D72, D74, D82
\thispagestyle{empty}
\end{abstract}

\newpage
\setcounter{page}{1}

\section{Introduction}\label{sec:intro}

Many team contests are decided not by a single aggregate action but by a
collection of component battles. Political parties compete district by
district, military coalitions fight across multiple fronts, and litigants
and sports teams meet in sequences of pairwise matchups. We study
environments in which every scheduled battle is completed and the overall
winner is determined by majority rule, as in an election in which every
district votes, a league in which every fixture is played, or a portfolio
of projects that are all carried out. The value of winning any particular
battle then depends on whether that battle can affect the aggregate
outcome.

What a designer of such a contest values is what contestants produce---%
votes mobilized, cases won, projects delivered---rather than the
resources they burn producing it. This distinction is especially
consequential when resources convert into performance nonlinearly. A
benchmark result establishes a strong neutrality: \citet{fu-lu-pan-2015}
show that in
majority-rule team contests neither total expected effort nor the overall
contest outcome depends on the temporal structure or feedback policy. We
show that the corresponding neutrality in our environment is a statement
about resources. Expected aggregate expenditure is invariant to
prior-outcome disclosure and to the temporal structure for every conversion
technology; expected total output is not.

The two design instruments operate through the same mechanism. Neither
changes how likely a battle is to be pivotal on average, since that is
fixed by the primitives; each can only redistribute pivotality across
histories.
Without prior-outcome disclosure, players evaluate a battle by its ex
ante likelihood of affecting the team prize; with disclosure, they
condition on outcomes already resolved, so the same battle is highly
consequential after some histories and almost irrelevant after others.
Redistribution of this kind is irrelevant for anything linear and
decisive for anything curved, and the curvature that decides it is that
of the technology converting expenditure into output.\footnote{The role
of cost curvature is also related to \citet{moldovanu-sela-2001}, who
show that effort-cost curvature shapes optimal prize allocation. Our
design instruments are different: disclosure and temporal structure
redistribute incentive intensity across histories rather than prizes
across ranks.}

We model each component battle as an all-pay auction with privately known
abilities, in which contestants choose costly expenditures that a
possibly nonlinear technology converts into output. Two component-battle
properties drive every design result. First, the probability that a team
wins a battle does not depend on the level of its effective prize spread,
so disclosed histories leave the distribution of battle outcomes
untouched. Second, equilibrium ability-scaled expenditure is proportional
to that spread. This scaling makes the expected-output response inherit
the curvature of the technology: convex output costs generate a concave
response, concave costs a convex response, and linear costs a linear
response. Both properties hold for finite-support and for smooth
continuous-type ability distributions. They also arise under a broader
class of component contests whose success functions are homogeneous of
degree zero in the linearly costly action, provided the component
equilibrium outcome distribution is unique.

Our central result, Theorem~\ref{thm:information-refinement}, delivers a
distributional comparison stronger than neutrality of expected
expenditure. Whenever one admissible information structure refines
another, equilibrium ability-scaled expenditure under the finer structure
is a mean-preserving spread of its counterpart under the coarser
structure, player by player. Equality of expected aggregate
expenditure alone would not rank nonlinear output: total output is the
sum of the contestants' transformed expenditures, not the transformation
of their aggregate expenditure. The player-level convex-order relation is
what permits Jensen's inequality to rank output.

For a fixed temporal structure, full prior-outcome disclosure induces a
mean-preserving spread of each battle's effective prize spread relative
to no disclosure. Hence no disclosure generates weakly higher expected
total output under convex output costs, whereas full
disclosure generates weakly higher output under concave costs; the two
policies are output-equivalent under linear costs. The same ranking
extends to committed, public, and common garblings of previously
resolved outcomes, with no and full disclosure providing the two bounds.

Temporal refinement operates through the same informational channel.
Under full disclosure, splitting a block creates additional disclosure
dates and makes the effective prize spreads of affected battles weakly
more dispersed without changing their conditional means. Consequently,
temporal refinement weakly reduces expected total output under
convex costs, weakly increases it under concave costs, and is neutral
under linear costs. Under no prior-outcome disclosure, the temporal
structure is irrelevant because previous outcomes are never observed.
These conclusions require no symmetry across component battles and rank
temporal structures only by refinement.

Expenditure neutrality, by contrast, is exact. Equilibrium expenditure in
a component battle is proportional to its effective prize spread, and a
linear function of that spread is insensitive to any redistribution
preserving its mean. Expected aggregate expenditure is therefore the same
under no disclosure, under full disclosure, under every garbling in the
class above, and under every ordered temporal structure, for every
admissible technology; we give its common value in closed form. The design
problem is thus about the distribution and conversion of expenditure into
output, not about the average resources contestants are induced to spend.

\citet{fu-lu-pan-2015} introduce the majority-rule team contest with
pairwise battles studied here, and the design instruments examined since
have been prizes and player assignment. \citet{feng-lu-2018} and
\citet{feng-jiao-kuang-lu-2024} characterize optimal prize structures,
while \citet{fu-lu-2020} and \citet{konishi-pan-simeonov-2022} study the
ordering and matching of players; \citet{konrad-kovenock-2009} and
\citet{hafner-2017} analyze related multi-battle formats. We hold prizes,
abilities, and the matching fixed and vary only what contestants know
about pivotality, an instrument that is non-pecuniary and leaves the
component battles themselves unchanged.
\citet{barbieri-serena-2024} also compare temporal structures, in a
contest-success-function model with the winner's output as the
objective. A broader literature on dynamic and sequential competition
studies how the history of play shapes incentives through momentum,
discouragement, and last-stand effects, including
\citet{harris-vickers-1987}, \citet{strumpf-2002},
\citet{klumpp-polborn-2006}, \citet{gelder-2014},
\citet{barbieri-serena-2022}, \citet{fan-kuang-lu-2025}, and
\citet{moller-beccuti-2025}. \citet{malueg-yates-2010} test contest
theory in best-of-three tennis matches, a format with early stopping
that our completion requirement excludes.

In the literature on disclosure in contests and all-pay auctions, the
disclosed object is typically an ability or an aggregate of abilities,
so that disclosure operates by changing beliefs about the strength of
the opposition; see \citet{fu-jiao-lu-2014}, \citet{zhang-zhou-2016},
\citet{chen-jiang-knyazev-2017}, \citet{lu-ma-wang-2018},
\citet{chen-ma-zhu-zhou-2020}, \citet{fu-wu-2022},
\citet{serena-2022}, and \citet{chen-serena-2023}. Here the disclosed
object is a battle outcome, and the channel is pivotality: disclosure
leaves beliefs about abilities and the component-game primitives
unchanged but reallocates the value of winning a battle across
histories. At the component level, our model is an all-pay contest with
privately known abilities as in \citet{konrad-kovenock-2010}. The closest
antecedent is \citet{chen-lu-bai-2026}, whose regularization of
two-player all-pay auctions with discrete private values is the
construction we use to characterize each battle. We extend the resulting
analysis to nonlinear output and develop a quantile-matching analogue for
smooth continuous types.

Finally, we extend the analysis beyond the finite-support,
own-type-observability baseline. The disclosure and temporal-refinement
results continue to hold under smooth continuous types and, subject to
uniqueness of the equilibrium outcome distribution, under general
degree-zero component contest technologies. Under strictly positive battle-specific prizes,
the fixed-structure disclosure ranking also extends to local type
observability and global type revelation.

The rest of the paper is organized as follows.
Section~\ref{sec:model} presents the model, and
Section~\ref{sec:example} illustrates the main mechanism.
Section~\ref{sec:equilibrium} characterizes equilibrium, first in a
component battle and then in the overall contest.
Section~\ref{sec:information-refinement} establishes the general
information-refinement theorem. Sections~\ref{sec:fixed-disclosure}
and~\ref{sec:temporal-refinement} apply it to outcome disclosure and
temporal refinement, respectively, and Section~\ref{sec:expenditure}
establishes expenditure neutrality across both nested and nonnested
designs. Section~\ref{sec:extensions} presents the extensions, and
Section~\ref{sec:conclusion} concludes.

\section{Model}\label{sec:model}

This section defines the multi-battle team contest, the temporal structure, and the finite-support private-information baseline. Subsection~\ref{subsec:continuous-types} replaces that baseline with a smooth continuous-type specification, under which all of the design results continue to hold.

\subsection{Multi-battle team contest}\label{subsec:multi-battle}

Two teams, indexed by $i=A,B$, compete for a team prize that each member
of the winning team values at $V>0$. There are $2n+1$ pairwise component
battles, where $n\geq1$ is an integer, indexed by
\[
\mathcal T=\{1,2,\ldots,2n+1\}.
\]
Each team consists of $2n+1$ players, and each player takes part in
exactly one battle. The two players assigned to battle $t$ compete
head-to-head; we write $i(t)$ for team $i$'s player in that battle and
$-i(t)$ for his opponent. The matching is exogenous and fixed, so
neither team chooses which of its players appears in which battle. The
order in which battles are played is chosen by the designer, as
described in Subsection~\ref{subsec:blocks}. A team wins the team prize
if and only if it wins at least $n+1$ of the $2n+1$ battles. All
scheduled battles are carried out, even after one team has secured the
majority.

Let $W_t\in\{A,B\}$ denote the winner of battle $t$ and let
$\mathbf 1\{\cdot\}$ denote the indicator function, so that team $A$
wins the team prize if and only if
\[
\sum_{t\in\mathcal T}\mathbf 1\{W_t=A\}\geq n+1,
\]
and otherwise team $B$ wins. Each battle also awards its winner a
battle-specific prize $\pi_t\geq0$, valued equally by the two assigned
players. Player $i(t)$'s payoff is therefore
\begin{equation}
\Pi^{i(t)}
=
V\mathbf 1\{\text{team }i\text{ wins the team prize}\}
+
\pi_t\mathbf 1\{W_t=i\}
-
e^{i(t)},
\label{eq:payoff}
\end{equation}
where $e^{i(t)}\geq0$ is the expenditure he chooses in battle $t$.

Battle $t$ is pivotal for the team prize if and only if the other $2n$
battles are split evenly, that is, if and only if team $A$ wins exactly
$n$ of them. Define
\begin{equation}
D_t=
\left\{
\sum_{s\neq t}\mathbf 1\{W_s=A\}=n
\right\}.
\label{eq:pivotal-event}
\end{equation}
When $D_t$ occurs, the winner of battle $t$ secures the team prize for
his team; when $D_t$ does not occur, the outcome of battle $t$ does not
affect the allocation of the team prize. Hence, by~\eqref{eq:payoff} and
holding the outcomes of the other $2n$ battles fixed, the difference
between a contestant's payoff from winning battle $t$ and his payoff
from losing it, at any given expenditure, is
\begin{equation}
Y_t=\pi_t+V\mathbf 1\{D_t\}.
\label{eq:Yt}
\end{equation}
This difference is the same for the two players assigned to battle $t$.

When choosing expenditures, those two players evaluate the battle using
its \emph{effective prize spread}, the expectation of $Y_t$ conditional
on the public information available when the battle is played, that is,
$\pi_t$ plus $V$ times the conditional probability that battle $t$ is
pivotal. Because $Y_t$ depends only on the outcomes of other battles,
independence across battles implies that conditioning additionally on
either current contestant's type does not change this expectation. The
effective prize spread is therefore common to the two players; in the
component-battle analysis below, $v$ denotes a generic value of it.

\subsection{Temporal structure and disclosure}\label{subsec:blocks}

The contest is organized according to a temporal structure, which is an ordered partition
\[
\mathcal P=(\mathcal B_1,\ldots,\mathcal B_L)
\]
of the battle set $\mathcal T$ into nonempty blocks, where $\mathcal B_\ell$ denotes the $\ell$th block, for $\ell=1,\ldots,L$, with $\mathcal B_\ell\cap\mathcal B_{\ell'}=\emptyset$ for all $\ell\neq\ell'$ and $\bigcup_{\ell=1}^{L}\mathcal B_\ell=\mathcal T$. The two extremes are
the simultaneous structure $\mathcal P^{\mathrm{sim}}=(\mathcal T)$,
which consists of a single block, and the singleton-block structures, in
which every block contains one battle. We write $\ell(t)$ for the index
of the block containing battle $t$.

Battles are carried out sequentially from block $1$ to block $L$. Battles in the same block are carried out simultaneously. Hence, players
in the current block do not observe the outcomes of battles in that block
before choosing their expenditure levels. Only outcomes resolved in
preceding blocks are available for disclosure when the current block
begins; whether these outcomes are revealed depends on the disclosure
policy. Because all battles are carried out, the fact that a later block
is reached does not itself reveal earlier outcomes. This completion
requirement excludes standard early-stopping formats, in which the
cancellation of later battles would itself reveal information about
previous outcomes. The requirement therefore delimits the design space
rather than simplifying the analysis: under early stopping, the fact
that a battle is played reveals that neither team has yet secured the
majority, so the no-disclosure policy cannot be implemented and the two
policies below are not comparable.

The designer commits ex ante to one of two outcome-disclosure policies,
which Subsection~\ref{subsec:disclosure-ranking} later embeds in a
broader class of committed public garblings.
\begin{itemize}[leftmargin=*]
    \item \textbf{No prior-outcome disclosure $(N)$}: Before each block $\ell$, players observe none of the battle outcomes resolved in preceding blocks.

    \item \textbf{Full prior-outcome disclosure $(F)$}: Before each block $\ell$, players observe all battle outcomes resolved in preceding blocks.
\end{itemize}
The temporal structure and the disclosure policy are publicly announced before play begins. Except for the outcomes revealed by the policy, players receive no public information about earlier battles; in particular, type realizations, expenditures, and outputs in other battles are not disclosed. The designer chooses only the ordered temporal structure and the committed public outcome-disclosure policy. The team prize $V$, battle-specific prizes $\{\pi_t\}$, contestant assignments, ability distributions, technology $\gamma$, and the requirement that every battle be completed are exogenous.

\subsection{Types, technology, and payoffs}\label{subsec:types-output}

In battle $t$, player $i(t)$'s ability, or type, is denoted by
$\alpha^{i(t)}$, and types are drawn independently across players and
battles. In the finite-support baseline, player $i(t)$'s ability
distribution has finite support
\[
\left\{\alpha_{1}^{i(t)},\ldots,\alpha_{J_t^{i}}^{i(t)}\right\},
\qquad
\alpha_1^{i(t)}>\cdots>\alpha_{J_t^{i}}^{i(t)}>0,
\]
where $J_t^i$ is the number of primitive ability types of player $i(t)$
in battle $t$, with probabilities
\[
\left(p_1^{i(t)},\ldots,p_{J_t^{i}}^{i(t)}\right),
\qquad
\sum_{k=1}^{J_t^{i}}p_k^{i(t)}=1.
\]
Each player observes only his own type before choosing his expenditure,
and contestants cannot communicate privately. The ability distributions
and all other model primitives are common knowledge.

Each component battle is an all-pay auction. Player $i(t)$ chooses an
irreversible expenditure $e^{i(t)}\geq0$, which generates the
(effective) output
\[
x^{i(t)}
=
\gamma\left(\alpha^{i(t)}e^{i(t)}\right),
\]
where $\gamma:\mathbb R_+\to\mathbb R_+$ is continuous, twice
continuously differentiable on $\mathbb R_{++}$, and satisfies
$\gamma(0)=0$ and $\gamma'(y)>0$ for $y>0$; in particular, $\gamma$ is
strictly increasing. Equivalently, the cost of any feasible output
$x^{i(t)}\in\gamma(\mathbb R_+)$ is
\[
e^{i(t)}
=
\frac{\gamma^{-1}\left(x^{i(t)}\right)}
{\alpha^{i(t)}}.
\]
Throughout the paper, total output refers to the sum of the outputs $x$,
and aggregate expenditure to the sum of the costly inputs $e$. Players
are risk neutral, and the designer's objective is to maximize expected
total output; contestants' expected aggregate expenditure is treated
separately in Section~\ref{sec:expenditure}.

The player who generates the higher output wins the battle. When the
effective prize spread is positive, ties are resolved by fair lotteries
that are independent across battles and independent of all type draws;
outcomes of battles whose effective prize spread is zero are recorded
according to the convention introduced in
Subsection~\ref{subsec:overall-contest}. The main comparisons focus
on the cases in which $\gamma$ is globally concave, globally convex, or
linear. Because the cost of output is $\gamma^{-1}(x)/\alpha$, these
cases correspond respectively to convex, concave, and linear
\emph{output costs}, and we use that terminology throughout.

By~\eqref{eq:Yt}, a contestant's expected gain from winning battle $t$
rather than losing it is its effective prize spread $v$. Subtracting the
action-independent expected payoff from losing, player $i(t)$'s
normalized payoff from an output profile is therefore
\begin{equation}
u^{i(t)}\!\left(x^{i(t)},x^{-i(t)}\right)
=
\begin{cases}
v-e^{i(t)}, & \text{if }x^{i(t)}>x^{-i(t)},\\
-e^{i(t)}, & \text{if }x^{i(t)}<x^{-i(t)},\\
\frac{v}{2}-e^{i(t)}, & \text{if }x^{i(t)}=x^{-i(t)},
\end{cases}
\label{eq:component-payoff}
\end{equation}
where $e^{i(t)}=\gamma^{-1}\left(x^{i(t)}\right)/\alpha^{i(t)}$ is the
expenditure that generates output $x^{i(t)}$.

\section{Illustrative Example}\label{sec:example}

This section illustrates how information about pivotality affects
expected output in the simplest nontrivial setting.
\medskip

\textbf{Setup.}\quad
Three battles decide a contest between teams $A$ and $B$, with a team
winning if it takes at least two. Normalize the team prize to one and set every
battle-specific prize to zero, so that a battle's effective prize spread
is simply the conditional probability that it is pivotal. Abilities are
deterministic, a one-type special case of the finite-support
environment, with pairs
\[
(\alpha^{A(1)},\alpha^{B(1)})=(2,1),\qquad
(\alpha^{A(2)},\alpha^{B(2)})=(1,1),\qquad
(\alpha^{A(3)},\alpha^{B(3)})=(1,2),
\]
so team $A$ is stronger in battle $1$, the players are evenly matched in
battle $2$, and team $A$ is weaker in battle $3$. The technology is
$\gamma(y)=y^\rho$ with $\rho>0$, so that the output cost of a player of
ability $\alpha$ is $x^{1/\rho}/\alpha$, and $\rho<1$, $\rho=1$, and
$\rho>1$ correspond to convex, linear, and concave output costs.
\medskip

\textbf{Component battles.}\quad
With deterministic abilities each battle is a complete-information
all-pay auction. Multiplying a player's
payoff~\eqref{eq:component-payoff} by his ability $\alpha>0$ is a
positive rescaling, so it leaves best responses unchanged while turning
the battle into an all-pay auction with the common cost function
$\gamma^{-1}$ and valuations $\alpha^{A(t)}v$ and $\alpha^{B(t)}v$:
ability enters only through the induced valuation, and the battle is an
instance of the general-cost all-pay auction of
\citet{kaplan-wettstein-2006}. For abilities $a\geq b$, their
Corollary~1 gives the stronger player's winning probability as
$1-b/(2a)$, independent of the positive level of $v$, and integrating
$x=\gamma(y)$ against the equilibrium distributions of their Theorem~1
gives expected total output
\[
\phi(v;a,b)
=
\frac{1}{v}
\left(\frac{1}{a}+\frac{1}{b}\right)
\int_0^{bv}\gamma(y)\,dy.
\]
The first three columns of Table~\ref{tab:example-battles} record the
resulting winning probability $\mu_t$ of team $A$ and output response
$\phi_t$ in each battle, the latter inheriting the curvature of
$\gamma$.
\medskip

\textbf{Disclosure and refinement.}\quad
Compare three designs: the simultaneous structure
$\mathcal P^{\mathrm{sim}}=(\{1,2,3\})$, under which no outcome precedes
any battle and disclosure is vacuous; the singleton-block structure
$\mathcal P^{\mathrm{seq}}=(\{1\},\{2\},\{3\})$ with full prior-outcome
disclosure; and the intermediate $\mathcal P^{12,3}=(\{1,2\},\{3\})$,
which resolves battles $1$ and $2$ together and discloses their outcomes
before battle $3$. A battle is pivotal exactly when the other two are
split, so under $\mathcal P^{\mathrm{sim}}$ battle $1$ faces the spread
$\mu_2(1-\mu_3)+(1-\mu_2)\mu_3=1/2$; the remaining entries of
Table~\ref{tab:example-battles} follow in the same way.

\begin{table}[H]
\centering
\setlength{\tabcolsep}{5pt}
\renewcommand{\arraystretch}{1.15}
\caption{Component battles and the effective prize spreads they face.
Where a spread is random, its values are listed with the corresponding
probabilities in parentheses.}
\label{tab:example-battles}
\begin{tabular}{@{}cccccc@{}}
\hline
& & & \multicolumn{3}{c}{Effective prize spread $v_t$}\\
\cline{4-6}
Battle $t$ & $\mu_t$ & $\phi_t(v)$
& $\mathcal P^{\mathrm{sim}}$
& $\mathcal P^{12,3}$
& $\mathcal P^{\mathrm{seq}}$\\
\hline
$1$ & $\frac34$ & $\frac{3}{2(\rho+1)}v^\rho$
& $\frac12$ & $\frac12$ & $\frac12$\\
$2$ & $\frac12$ & $\frac{2}{\rho+1}v^\rho$
& $\frac58$ & $\frac58$
& $\frac34\,\left(\frac34\right),\ \frac14\,\left(\frac14\right)$\\
$3$ & $\frac14$ & $\frac{3}{2(\rho+1)}v^\rho$
& $\frac12$
& $1\,\left(\frac12\right),\ 0\,\left(\frac12\right)$
& $1\,\left(\frac12\right),\ 0\,\left(\frac12\right)$\\
\hline
\end{tabular}
\end{table}

Reading across the table, every battle keeps the same mean spread---%
$\frac34\cdot\frac34+\frac14\cdot\frac14=\frac58$ in battle $2$ and
$\frac12\cdot1+\frac12\cdot0=\frac12$ in battle $3$---while its
dispersion weakly increases. Since expected total output is
$TE(\mathcal P)=\sum_{t}\E\left[\phi_t(v_t)\right]$ under every design,
the three differ only in how $\phi_t$ evaluates a more or less dispersed
spread.
Table~\ref{tab:example-comparison} reports the resulting values.

\begin{table}[H]
\centering
\setlength{\tabcolsep}{7pt}
\renewcommand{\arraystretch}{1.05}
\caption{Disclosure, temporal refinement, and expected total output.}
\label{tab:example-comparison}
\begin{tabular}{@{}lccc@{}}
\hline
Technology
&
$TE(\mathcal P^{\mathrm{sim}})$
&
$TE(\mathcal P^{12,3})$
&
$TE(\mathcal P^{\mathrm{seq}})$
\\
\hline
$\gamma(y)=\sqrt{y}$ & $2.468$ & $2.261$ & $2.240$ \\
$\gamma(y)=y^2$      & $0.510$ & $0.635$ & $0.667$ \\
$\gamma(y)=y$        & $1.375$ & $1.375$ & $1.375$ \\
\hline
\end{tabular}
\end{table}

The table illustrates the common role of information refinement. Convex
output costs favor no disclosure and coarser temporal structures,
concave costs favor full disclosure and finer temporal structures, and
linear costs make both comparisons neutral.\footnote{The analysis below
generalizes these comparisons to arbitrary odd numbers of battles and
heterogeneous component battles, and ranks any two temporal structures
that are related by refinement.}

\section{Equilibrium Analysis}\label{sec:equilibrium}

This section characterizes equilibrium under the finite-support private-information specification, for an arbitrary temporal structure and disclosure policy held fixed throughout. Subsection~\ref{subsec:component-battles} treats a single component battle facing a given effective prize spread $v$, which is there a generic scalar; Subsection~\ref{subsec:overall-contest} assembles the component equilibria into an equilibrium of the overall contest, in which each $v$ is endogenously determined as a conditional expectation of $Y_t$.

\subsection{Component battles}\label{subsec:component-battles}

Consider first the linear benchmark $\gamma^0(y)=y$, in which output coincides with ability-scaled expenditure, $x=y$. A player of ability $\alpha$ who produces $y$ pays $y/\alpha$, so the component battle is a two-player all-pay auction with discrete private values and linear costs, the value of an ability type $\alpha$ being $\alpha v$. Such an auction has a unique and monotone equilibrium \citep{siegel-2014}, which admits no closed form in terms of the primitive types; the regularization procedure of \citet{chen-lu-bai-2026} supplies one, and the next lemma states it.

\begin{lemma}[Linear-benchmark equilibrium; \citealp{siegel-2014,chen-lu-bai-2026}]\label{lem:linear-equilibrium}
Fix battle $t$, let $v>0$, and let the technology be $\gamma^0(y)=y$. The component battle has a unique equilibrium, and it is monotone. Each player's primitive ability types can be subdivided into \emph{divisions} $k=1,\ldots,m_t$, where division $k$ of player $i(t)$ carries ability $\alpha_k^{i(t)}$ and probability mass $p_k^{i(t)}$, the abilities are weakly decreasing in $k$ for each player, the masses of the divisions subdividing a primitive type sum to its probability, and paired divisions carry a common mass-to-ability ratio,
\begin{equation}
\frac{p_k^{A(t)}}{\alpha_k^{A(t)}}
=
\frac{p_k^{B(t)}}{\alpha_k^{B(t)}},
\qquad
k=1,\ldots,m_t;
\label{eq:regularized-identity}
\end{equation}
any further divisions of one player are unpaired. Define
\begin{equation}
I_k^t
=
\sum_{s=k}^{m_t}p_s^{A(t)}\alpha_s^{B(t)}
=
\sum_{s=k}^{m_t}p_s^{B(t)}\alpha_s^{A(t)},
\qquad
I_{m_t+1}^t=0,
\label{eq:Ik}
\end{equation}
the two sums being equal by~\eqref{eq:regularized-identity}. In equilibrium both players randomize ability-scaled expenditure over the common range $\left[0,vI_1^t\right]$: a player in division $k$ randomizes uniformly over $\left[vI_{k+1}^t,vI_k^t\right]$, and a player in an unpaired division expends zero. Expected total output and expected aggregate expenditure are, respectively,
\begin{equation}
\frac{v}{2}\sum_{k=1}^{m_t}
\left(p_k^{A(t)}+p_k^{B(t)}\right)
\left(I_k^t+I_{k+1}^t\right)
\qquad\text{and}\qquad
v\sum_{k=1}^{m_t}
\frac{p_k^{A(t)}}{\alpha_k^{A(t)}}
\left(I_k^t+I_{k+1}^t\right).
\label{eq:linear-expenditure}
\end{equation}
Each player's expected expenditure is half the second expression, so the two players spend the same amount in expectation.
\end{lemma}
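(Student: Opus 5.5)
Existence, uniqueness and monotonicity are taken from \citet{siegel-2014}. The division structure and the uniform randomization are taken from the regularization of \citet{chen-lu-bai-2026}. My own work is to re-derive the equilibrium in the regularized form so that the two expectation formulas in~\eqref{eq:linear-expenditure} can be computed directly.

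\textbf{Step 1: reduction.} Multiplying the payoff~\eqref{eq:component-payoff} by $\alpha$ turns each battle into an all-pay auction with linear cost $y$ and private value $\alpha v$. Siegel's theorem then gives a unique equilibrium, and it is monotone: higher types bid on higher, adjacent intervals.

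\textbf{Step 2: divisions and intervals.} I would split primitive types into divisions so that each division interval of one player coincides with exactly one division interval of the other. On a common interval, player $-i$'s type $k$ must be indifferent, which forces player $i$'s bid distribution there to have density $1/(\alpha_k^{-i}v)$. That interval carries mass $p_k^{i}$, so its length is $v\,p_k^{i}\alpha_k^{-i}$. The same length computed from the other side is $v\,p_k^{-i}\alpha_k^{i}$. Equating the two gives~\eqref{eq:regularized-identity}, since $p^A\alpha^B=p^B\alpha^A$ is equivalent to $p^A/\alpha^A=p^B/\alpha^B$.

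Stacking the intervals downward from the top gives the endpoints $vI_k^t$. Within each interval the conditional density is constant, so play is uniform on $[vI_{k+1}^t,vI_k^t]$. Mass left over for one player after the pairing is exhausted bids zero; these are the unpaired divisions. I would cite Chen--Lu--Bai for the claim that this greedy pairing reproduces Siegel's unique equilibrium. This matching step is the main obstacle: one must rule out atoms at positive bids and show the pairing terminates consistently with the type orders. I would rely on their result rather than reprove it.

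\textbf{Step 3: expectations.} With $\gamma^0$ linear, output equals $y$. A uniform draw on $[vI_{k+1}^t,vI_k^t]$ has mean $v(I_k^t+I_{k+1}^t)/2$. Summing this mean weighted by $p_k^{A}+p_k^{B}$ over $k$ gives the total output formula, with unpaired divisions contributing zero.

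Expenditure is $y/\alpha_k$, so player $A$ spends $\sum_k (p_k^A/\alpha_k^A)\,v(I_k^t+I_{k+1}^t)/2$ in expectation. By~\eqref{eq:regularized-identity} player $B$'s expected spending is the identical sum. Hence each player spends half of the stated aggregate expression, and the two players spend equal amounts.
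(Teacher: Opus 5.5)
Your Steps 1 and 3 match the paper: the rescaling by $\alpha$, the uniform-mean computation of output, and the equal split of expenditure via \eqref{eq:regularized-identity}. The gap is in Step 2. There you derive the identity as a \emph{necessary} condition, assuming equilibrium play already has the paired-interval form. You then hand to \citet{chen-lu-bai-2026} both the construction of the divisions and the claim that the resulting profile is the equilibrium. That handed-off claim is the whole content of the lemma apart from the expectation formulas. As written, your argument therefore proves \eqref{eq:linear-expenditure} only conditionally.

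The obstacle you name, ruling out atoms at positive bids and checking that the pairing is consistent, arises only on a necessity route. The paper argues by construction and verification instead. Since \citet{siegel-2014} already gives uniqueness, it suffices to exhibit one equilibrium of the stated form.

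Your indifference computation already supplies the construction. At each bid level it gives $dF^{A}/\alpha^{A}=dF^{B}/\alpha^{B}$, where the abilities are those of the types bidding there. So the right way to align the two players is by cumulative mass-to-ability, counted from the top:
\begin{itemize}
\item Lay each player's primitive types, from highest to lowest ability, on $[0,\Lambda^i]$ with lengths $\hat p_j^i/\hat\alpha_j^i$.
\item Align the two players at zero and take the common refinement on $[0,\min\{\Lambda^A,\Lambda^B\}]$.
\item Set $p_k^{i(t)}=\lambda_k\alpha_k^{i(t)}$. Any residual belongs to one player only and carries that player's lowest abilities.
\end{itemize}

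Verification is then short:
\begin{itemize}
\item On $[vI_{r+1}^t,vI_r^t]$ the opponent's CDF has slope $1/(v\alpha_r^{i(t)})$. This slope is fixed by the deviating player's own division-$r$ ability.
\item Hence division $k$'s marginal payoff there is $1/\alpha_r^{i(t)}-1/\alpha_k^{i(t)}$. It is nonnegative for $r>k$, zero for $r=k$, and nonpositive for $r<k$, because own abilities weakly decrease in $k$.
\item Bids above $vI_1^t$ only add cost.
\item Unpaired divisions face a nonpositive derivative everywhere, and the other player has no atom at zero, so they optimally bid zero.
\item Against an opponent's atom at zero, a zero bid collects only half of it, so it is no better than the limit of small positive bids.
\end{itemize}
No structural property of arbitrary equilibria is needed beyond Siegel's uniqueness.
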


\begin{proof}
    See Appendix~\ref{app:reduction-proof}.
\end{proof}

The two expressions in~\eqref{eq:linear-expenditure} differ because a player of ability $\alpha$ spends $y/\alpha$ to produce $y$: output weights divisions by their masses, whereas expenditure weights them by mass per unit of ability. Both are proportional to $v$, which is the origin of the expenditure neutrality established in Section~\ref{sec:expenditure}.

The equal split of expenditure follows directly from the mass-to-ability identity: player $i(t)$'s expected expenditure is $\frac{v}{2}\sum_{k}\frac{p_k^{i(t)}}{\alpha_k^{i(t)}}\left(I_k^t+I_{k+1}^t\right)$, and by~\eqref{eq:regularized-identity} the summands agree across the two players term by term, while unpaired divisions spend nothing. Expected \emph{bids} are not equalized in this way: the player more likely to be drawn into a high division bids more on average, and it is only after dividing by ability that the two coincide.

Because the intervals $\left[vI_{k+1}^t,vI_k^t\right]$ are disjoint and ordered, a player in division $k$ loses to every opponent division whose interval lies above his own and wins against every division whose interval lies below it; this is what makes the equilibrium expressible in closed form. Unpaired divisions generate no output but are counted in the battle-winning probability. The representation depends only on the two ability distributions, since $v$ multiplies every private value and cancels from the normalized inverse-value coordinates, so one representation serves every effective prize spread and every admissible technology; Appendix~\ref{app:reduction-proof} gives the construction. Throughout the finite-support analysis, $\alpha_k^{i(t)}$ and $p_k^{i(t)}$ refer to these divisions unless stated otherwise.

\textbf{Illustration.}\quad
Suppose player $A(t)$ has primitive types $(\alpha,p)=(2,0.4)$ and $(1,0.6)$, and player $B(t)$ has $(3,0.3)$ and $(1,0.7)$. Assigning each primitive type the inverse-ability length $p/\alpha$ gives breakpoints $(0,0.2,0.8)$ and $(0,0.1,0.8)$, whose common refinement has three paired divisions:
\[
\begin{array}{ccc}
\hline
k & \left(\alpha_k^{A(t)},p_k^{A(t)}\right) & \left(\alpha_k^{B(t)},p_k^{B(t)}\right)\\
\hline
1 & (2,0.2) & (3,0.3)\\
2 & (2,0.2) & (1,0.1)\\
3 & (1,0.6) & (1,0.6)\\
\hline
\end{array}
\]
Each row satisfies~\eqref{eq:regularized-identity}, since $p_k^{A(t)}/\alpha_k^{A(t)}=p_k^{B(t)}/\alpha_k^{B(t)}$ equals $0.1$, $0.1$, and $0.6$ for $k=1,2,3$. Hence $m_t=3$, and~\eqref{eq:Ik} gives $I_3^t=0.6$, $I_2^t=0.8$, and $I_1^t=1.4$: the two players in division $k$ randomize uniformly over the same interval, and the three intervals $[0,0.6v]$, $[0.6v,0.8v]$, and $[0.8v,1.4v]$ tile the common bidding range. Total output is $1.12v$ and aggregate expenditure $0.72v$, the latter split evenly at $0.36v$ per player even though their expected bids are $0.54v$ and $0.58v$.

Now let the technology be a general $\gamma$, and write $y=\gamma^{-1}(x)$ for ability-scaled expenditure. Because $\gamma$ is strictly increasing, the player with the higher $y$ also produces the higher output and wins the battle, and a player of ability $\alpha$ who chooses $y$ still pays $y/\alpha$. The technology therefore does not affect the equilibrium distribution of ability-scaled expenditure; it affects only how that expenditure is converted into output.

\begin{lemma}[Equilibrium and output response]\label{lem:linear-reduction}
Fix battle $t$ and let $v\geq0$ be the effective prize spread faced by both players.
\begin{enumerate}[leftmargin=*,label=(\roman*)]
    \item \emph{(Strategic equivalence.)} The map $e\mapsto y=\alpha e$ is a bijection under which the component battle is strategically equivalent to the linear benchmark. Hence the component battle has a unique equilibrium, and its distribution of ability-scaled expenditure is the one described in Lemma~\ref{lem:linear-equilibrium}, independently of $\gamma$. In particular, each player's expected expenditure is the one given in Lemma~\ref{lem:linear-equilibrium} for every admissible technology, so the two players again spend the same amount in expectation.
    \item \emph{(Output response.)} Expected total output is
    \begin{equation}
    \phi_t(v)=
    \sum_{i\in\{A,B\}}\sum_{k=1}^{m_t}
    \frac{1}{\alpha_k^{-i(t)}}
    \int_{I_{k+1}^t}^{I_k^t}\gamma(vz)\,dz,
    \qquad v\geq0,
    \label{eq:phiI}
    \end{equation}
    and $\phi_t(0)=0$.
    \item \emph{(Level invariance.)} For $v>0$, team $A$'s equilibrium winning probability does not depend on $v$.
\end{enumerate}
\end{lemma}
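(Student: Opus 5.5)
We prove the three parts in order. Part (ii) is built on the strategy description in part (i), and part (iii) reads the winning probability off the same description.

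For part (i), fix battle $t$ and $v>0$. We change variables from expenditure to ability-scaled expenditure $y=\alpha e$. For each player and type $\alpha>0$, the map $e\mapsto\alpha e$ is a bijection of $\mathbb R_+$. Since $\gamma$ is strictly increasing, $x^{i(t)}>x^{-i(t)}$ holds if and only if $y^{i(t)}>y^{-i(t)}$, and ties correspond likewise. So the payoff \eqref{eq:component-payoff}, written in $y$, is exactly $v\cdot\mathbf 1\{\text{win}\}+\frac v2\mathbf 1\{\text{tie}\}-y/\alpha$. This is the linear-benchmark payoff with $\gamma^0$, and $\gamma$ does not appear in it.

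Strategies map bijectively type by type, and best responses are preserved, so the equilibrium sets correspond one to one. Lemma~\ref{lem:linear-equilibrium} then gives existence, uniqueness and monotonicity, together with the distribution of $y$. Expenditure is $e=y/\alpha$, which is a function of $y$ and the type only. Hence expected expenditure per player is the one computed in Lemma~\ref{lem:linear-equilibrium}, and the equal-split claim carries over unchanged.

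When $v=0$, every positive expenditure is strictly dominated by zero, so the unique equilibrium is $e\equiv0$. Zero expenditure gives $\phi_t(0)=\sum\int\gamma(0)\,dz=0$, which settles the case $v=0$ in both (i) and (ii).

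For part (ii), we compute expected output division by division. Player $i(t)$ is in division $k$ with probability $p_k^{i(t)}$. Conditional on that, his $y$ is uniform on $[vI_{k+1}^t,vI_k^t]$, an interval of length $v(I_k^t-I_{k+1}^t)$. By \eqref{eq:Ik}, this length equals $v\,p_k^{i(t)}\alpha_k^{-i(t)}$, because $I_k^t-I_{k+1}^t=p_k^{A(t)}\alpha_k^{B(t)}=p_k^{B(t)}\alpha_k^{A(t)}$. Substituting $y=vz$, the contribution of this division is
\[
p_k^{i(t)}\cdot\frac{1}{v(I_k^t-I_{k+1}^t)}\int_{vI_{k+1}^t}^{vI_k^t}\gamma(y)\,dy
=\frac{p_k^{i(t)}}{p_k^{i(t)}\alpha_k^{-i(t)}}\int_{I_{k+1}^t}^{I_k^t}\gamma(vz)\,dz,
\]
and the $p_k^{i(t)}$ cancel. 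Unpaired divisions contribute $\gamma(0)=0$, and paired divisions have $p_k^{i(t)}>0$, so the division step is legitimate. Summing over $i$ and $k$ gives \eqref{eq:phiI}. The only care needed is to use the common range as stated in Lemma~\ref{lem:linear-equilibrium}, so that both players' intervals use the same $I_k^t$.

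For part (iii), in $y$-coordinates both players' strategies equal $v$ times a distribution that does not depend on $v$. The division representation depends only on the ability distributions. Scaling both bids by the same $v>0$ leaves the win, lose and tie events unchanged. Ties have probability zero except at zero bids, which are resolved by fair lotteries independent of $v$. Hence $\Pr(W_t=A)$ is the same for every $v>0$.

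We expect the main obstacle to be rigor in part (i): establishing that the equilibrium correspondence is exact, including mixed strategies and tie-breaking. The plan is to push mixed strategies forward under $e\mapsto\alpha e$ and verify that the expected payoff of each type is identical pointwise.
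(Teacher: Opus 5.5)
Your proposal is correct and follows essentially the same route as the paper's appendix proof. The paper makes the change of variables $y=\alpha e$ and then additionally multiplies the payoff by $\alpha$ to obtain Siegel's value-$\alpha v$, unit-cost auction, a cosmetic difference from your direct appeal to Lemma~\ref{lem:linear-equilibrium}; it then computes expected output division by division with density $1/(vp_k^{i(t)}\alpha_k^{-i(t)})$ and the substitution $w=vz$, and reads level invariance off the $v$-independence of the distribution of $y/v$, exactly as you do.
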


\begin{proof}
    See Appendix~\ref{app:reduction-proof}.
\end{proof}

Lemma~\ref{lem:linear-reduction} is the source of the model's tractability: all strategic analysis takes place in the linear benchmark, and the curvature of $\gamma$ bears only on how equilibrium ability-scaled expenditures are converted into output. Only two ingredients are imported: uniqueness and monotonicity from \citet{siegel-2014}, and the closed-form divisional representation from \citet{chen-lu-bai-2026}. The output formula~\eqref{eq:phiI} and the level invariance in part~(iii) are established here. Part~(i) also carries a complete-information property over to private information: Corollary~1 of \citet{kaplan-wettstein-2006} shows that in a two-player complete-information all-pay auction each contestant's expected expenditure is invariant to the common cost function, and Lemma~\ref{lem:linear-reduction} establishes the same invariance under privately known abilities, with the common value given in closed form by Lemma~\ref{lem:linear-equilibrium}.

Let $\mu_t=\Prob(W_t=A)$ denote team $A$'s equilibrium winning probability in battle $t$ at a positive effective prize spread. By Lemma~\ref{lem:linear-reduction}, $\mu_t$ does not depend on the level of that spread. Moreover $\mu_t\in(0,1)$: in the divisional equilibrium of Appendix~\ref{app:reduction-proof}, the active divisions of the two players share the common support $[0,vI_1^t]$ and are atomless above zero, so each team wins with positive probability.

The curvature of $\phi_t$ follows directly from~\eqref{eq:phiI}: for each $z>0$ the map $v\mapsto\gamma(vz)$ has the same curvature as $\gamma$, and $\phi_t$ is a positive weighted integral of such maps. Equivalently, for $v>0$ wherever differentiation under the integral sign is valid,
\begin{equation}
\phi_t''(v)=
\sum_{i\in\{A,B\}}\sum_{k=1}^{m_t}
\frac{1}{\alpha_k^{-i(t)}}
\int_{I_{k+1}^t}^{I_k^t}z^2\gamma''(vz)\,dz.
\label{eq:phiI-second}
\end{equation}
Hence $\phi_t$ inherits the concavity, convexity, or linearity of $\gamma$. Since the output cost is $\gamma^{-1}(x)/\alpha$, convex output costs generate a concave output response, concave output costs a convex response, and linear output costs a linear response.

The design results below rest on two linked properties of the component
battle: $\mu_t$ is independent of the level of the effective prize spread,
and the distribution of normalized ability-scaled expenditure $y/v$ is
also level-invariant. The first keeps the outcome distribution fixed across
designs. The second implies both that the output response $\phi_t(v)$ has
the curvature of $\gamma$ and that nested information structures can be
ordered in convex order at the player level.

\begin{remark}[Scope of the component-battle properties]\label{rem:scale-invariance-scope}
Both properties stem from the multiplicative technology $x=\gamma(\alpha e)$. By Lemma~\ref{lem:linear-reduction}, each component battle is strategically an all-pay auction with linear costs and values proportional to $v$; that auction is homogeneous in $v$, so equilibrium ability-scaled expenditures scale proportionally with the spread and the winning probability is level-free. This scale invariance is what makes battle outcomes independent of disclosed histories and sustains the martingale identities~\eqref{eq:mean-preserving-disclosure} and~\eqref{eq:temporal-refinement-mps}. Under nonmultiplicative heterogeneity---for example, a player-specific output cost $c(x;a)=C(x+a)-C(a)$---the equilibrium winning probability may vary with the level of the spread. Disclosed histories would then affect not only the effective prize spreads but also the winning probabilities of later battles, and neither the disclosure comparisons nor the temporal-refinement results would apply in their present form.
\end{remark}

\subsection{The overall contest}\label{subsec:overall-contest}

We now assemble the component equilibria into an equilibrium of the overall contest. At a zero effective prize spread, both players produce zero output, and we record team $A$ as the winner with probability $\mu_t$.\footnote{This convention preserves the history-independent winning probability and hence the product-form spread formulas in Section~\ref{sec:fixed-disclosure}. It is payoff-irrelevant: a zero spread requires $\pi_t=0$ and that one team have already secured $n+1$ wins, so later recorded outcomes cannot affect the team prize. Other recording rules leave effective spreads unchanged but generally destroy the product-measure representation through $\Theta_A$.} Team $A$ therefore wins battle $t$ with probability $\mu_t$ at every public history, whatever the spread. Because types are drawn independently and players randomize independently, outcomes of distinct battles in the same block are independent; proceeding block by block, the outcome vector is a product of independent Bernoulli draws with parameters $(\mu_t)_{t\in\mathcal T}$.

For any subset $S\subseteq\mathcal T$, let $\Theta_A(r;S)$ denote the probability that team $A$ wins exactly $r$ of the battles in $S$:
\begin{equation}
\Theta_A(r;S)
=
\Prob\left(
\sum_{s\in S}\mathbf 1\{W_s=A\}=r
\right),
\label{eq:theta}
\end{equation}
where battle $s$ is won by team $A$ independently with probability $\mu_s$. We set $\Theta_A(r;S)=0$ whenever $r<0$ or $r>|S|$. Since $\mu_s\in(0,1)$ for every $s$, it follows that $\Theta_A(r;S)>0$ whenever $0\leq r\leq|S|$.

\begin{proposition}[Assembly and uniqueness of equilibrium outcomes]\label{prop:pbe-assembly}
Fix an ordered temporal structure and a committed, public, and common disclosure policy concerning previously resolved outcomes.
\begin{enumerate}[leftmargin=*]
    \item The overall contest has a perfect Bayesian equilibrium in which, at every public history, each current battle is played according to the component-battle equilibrium associated with its effective prize spread, with randomizations independent across battles.

    \item In every perfect Bayesian equilibrium, the recorded outcome vector $(W_t)_{t\in\mathcal T}$ has the same product distribution. At every on-path public history (almost surely when the signal space is continuous), each effective prize spread and the conditional distribution of output in every current battle are the same across equilibria. In particular, expected total output does not depend on the equilibrium selected.
\end{enumerate}
\end{proposition}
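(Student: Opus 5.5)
The plan is a forward induction over blocks. The central observation is that, by Lemma~\ref{lem:linear-reduction}(iii) and the zero-spread recording convention, the law of every recorded outcome is history-independent. So any equilibrium must induce the product measure, and from that measure the effective prize spreads follow.

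For part~1 I will build the profile directly. Let every battle be won by team $A$ with probability $\mu_t$, independently. Given the committed public policy, each block-$\ell$ battle $t$ at public history $h$ faces
\[
v_t(h)=\pi_t+V\,\Prob(D_t\mid h),
\]
where the conditional probability is computed under that product measure. Battle $t$ is then played according to the unique component equilibrium of Lemma~\ref{lem:linear-reduction}(i) at spread $v_t(h)$, with randomizations independent across battles.

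Consistency has to be checked. Because current types and randomizations are independent of everything else, team $A$ wins battle $t$ with probability $\mu_t$ at every history. This holds for $v>0$ by level invariance and for $v=0$ by the convention. So the realized outcome law is indeed the product measure, and beliefs are generated by Bayes' rule from the public signals.

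Sequential rationality comes next. A player's payoff depends on others' play only through $Y_t$. Conditional on $h$ and his own type, the expected value of $Y_t$ is $v_t(h)$, since $Y_t$ depends only on outcomes of other battles, which are independent of his type. Future battles' behavior cannot be influenced by his action except through the outcome $W_t$. Future outcome probabilities do not depend on the past, so winning only changes the pivotality payoff already captured by $Y_t$, and future players never see actions. Hence his problem reduces to \eqref{eq:component-payoff} at $v_t(h)$, where the component equilibrium is a best response. Off-path histories, which arise only from deviations by the designer's signals if any, are handled by the same construction with arbitrary consistent beliefs.

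For part~2, take an arbitrary PBE and argue by induction on $\ell$. The inductive claim is that the outcomes of blocks $1,\dots,\ell-1$ are independent Bernoulli$(\mu_s)$ and that the public signals are drawn by the committed garbling from them. Then at each on-path history $h$ the posterior over past outcomes is pinned down by Bayes' rule.

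Future outcomes need separate treatment, and this is the main obstacle. $Y_t$ also depends on simultaneous and later battles, whose laws depend on later equilibrium play. I would handle this with a nested argument. Within block $\ell$ the component battles are simultaneous and independent of $t$'s players. So by the independence above, each player faces an expected spread $v$ that is common to both players, with the other battles' outcome laws taken as fixed. Lemma~\ref{lem:linear-reduction}(i) then gives unique play, and part~(iii) gives winning probability $\mu_s$ regardless of $v$, provided $v>0$. For $v=0$ the convention gives the same.

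The dependence on the future is resolved because the winning probability is $\mu_s$ whatever the spread. Whatever later equilibrium play is, later outcomes are Bernoulli$(\mu_s)$, independent given the history. To avoid circularity, I would run the argument as a fixed point: the law of outcomes is determined block by block using only the fact that every battle's win probability is $\mu_s$ for every $v\ge 0$. This needs no knowledge of spreads.

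Hence the full outcome vector has the product law. Consequently $\Prob(D_t\mid h)$, and thus $v_t(h)$, is identical across equilibria at on-path $h$ (a.s. for continuous signals). Uniqueness of the component equilibrium then fixes the conditional output distribution. Expected total output is $\E\sum_t \phi_t(v_t)$, which is therefore equilibrium-independent.
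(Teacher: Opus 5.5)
Your proposal is correct and follows essentially the paper's own argument. For part 1, you build the profile from component equilibria at the spreads implied by the product measure. For part 2, you use the fact that each battle's recorded win probability is $\mu_t$ whatever its spread (level invariance for $v>0$, the recording convention for $v=0$), so the outcome law is the product measure, which pins down every on-path spread and, through component uniqueness, the output distribution. One step is looser than it needs to be: your ``fixed point'' step is cleanest as a backward induction over blocks, where the last block has no continuation and each earlier block then inherits history-independent continuation win probabilities, so its gain from winning is exactly $\E[Y_t\mid h]$; the paper's proof simply asserts this step.
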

\begin{proof}
See Appendix~\ref{app:pbe-assembly-proof}.
\end{proof}

Part~2 combines uniqueness of the component-battle equilibrium at a positive spread with the recording convention at a zero spread. Together these fix the battle-winning probability at $\mu_t$ at every public history, whatever the equilibrium, so the expected total output of a design is a well-defined number and the comparisons in Sections~\ref{sec:fixed-disclosure} and~\ref{sec:temporal-refinement} rank equilibrium outcomes rather than selected equilibria.

\section{Information Refinement and Convex Order}
\label{sec:information-refinement}

The disclosure and temporal-structure comparisons are applications of a
single information-refinement result. By
Proposition~\ref{prop:pbe-assembly}, the recorded outcome vector has the
same product distribution across all admissible designs, so those designs
can be evaluated on one probability space. An admissible information
profile $\boldsymbol{\mathcal F}=(\mathcal F_t)_{t\in\mathcal T}$ specifies
the public information available when each battle is played: $\mathcal F_t$
is generated by committed, public, and common signals that depend only on
outcomes of battles resolved in blocks preceding $\ell(t)$ under the
associated temporal structure and, for stochastic disclosure policies, on
the designer's randomization devices. These devices are independent of
all model primitives and equilibrium randomizations, and the probability
space is enlarged accordingly. Write
\[
V_t^{\mathcal F}=\E[Y_t\mid\mathcal F_t]
\]
for the effective prize spread under the profile. Expected total output is
then
\begin{equation}
TE(\boldsymbol{\mathcal F})
=\sum_{t\in\mathcal T}
\E\left[\phi_t\left(V_t^{\mathcal F}\right)\right].
\label{eq:design-as-information}
\end{equation}

Say that $\boldsymbol{\mathcal G}$ refines
$\boldsymbol{\mathcal F}$ if
$\mathcal F_t\subseteq\mathcal G_t$ for every battle $t$. The next theorem
gives both the first-moment and distributional implications of such a
refinement.

\begin{theorem}[Information refinement, convex order, and output]
\label{thm:information-refinement}
Suppose the admissible information profile $\boldsymbol{\mathcal G}$
refines $\boldsymbol{\mathcal F}$. Then:
\begin{enumerate}[leftmargin=*]
    \item For every battle $t$ and contestant $i(t)$, equilibrium
    ability-scaled expenditures can be coupled so that
    \begin{equation}
    y_{\mathcal F}^{i(t)}
    =
    \E\!\left[
    y_{\mathcal G}^{i(t)}
    \mid
    \mathcal F_t,\alpha^{i(t)},Z^{i(t)}
    \right],
    \label{eq:expenditure-martingale-coupling}
    \end{equation}
    where $Z^{i(t)}$ is the contestant's normalized equilibrium
    ability-scaled expenditure. Thus
    $y_{\mathcal G}^{i(t)}$ is a mean-preserving spread of
    $y_{\mathcal F}^{i(t)}$. The same conclusion holds for actual
    expenditure
    $e_{\mathcal H}^{i(t)}=y_{\mathcal H}^{i(t)}/\alpha^{i(t)}$,
    $\mathcal H\in\{\mathcal F,\mathcal G\}$.

    \item The two profiles induce the same expected expenditure for every
    contestant in every battle and hence the same expected aggregate
    expenditure.

    \item If output costs are convex, then
    $TE(\boldsymbol{\mathcal F})\geq
    TE(\boldsymbol{\mathcal G})$. If output costs are concave, the
    inequality is reversed. If output costs are linear, the two profiles
    induce the same expected total output.
\end{enumerate}
The output inequality is strict if, for at least one battle $t$,
$V_t^{\mathcal G}$ is nondegenerate conditional on $\mathcal F_t$ with
positive probability and $\phi_t$ is strictly curved in the relevant
direction over the induced range of effective prize spreads.
\end{theorem}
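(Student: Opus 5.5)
The proof runs on one probability space, supplied by Proposition~\ref{prop:pbe-assembly}. Since the recorded outcome vector has the same product distribution under every admissible design, the profiles $\boldsymbol{\mathcal F}$ and $\boldsymbol{\mathcal G}$ can be realized jointly with the outcomes, types, and the designer's randomization devices. The first step is the tower property. Because $\mathcal F_t\subseteq\mathcal G_t$,
\[
V_t^{\mathcal F}=\E[Y_t\mid\mathcal F_t]=\E\bigl[\E[Y_t\mid\mathcal G_t]\mid\mathcal F_t\bigr]=\E\bigl[V_t^{\mathcal G}\mid\mathcal F_t\bigr].
\]
So $V_t^{\mathcal G}$ is a mean-preserving spread of $V_t^{\mathcal F}$ in the conditional sense.

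Next comes the coupling for part~1. By Lemma~\ref{lem:linear-reduction}(i) and the divisional form of Lemma~\ref{lem:linear-equilibrium}, a contestant's equilibrium ability-scaled expenditure at spread $v$ can be written as $y=vZ^{i(t)}$. Here $Z^{i(t)}$ is drawn uniformly on $[I_{k+1}^t,I_k^t]$ given the contestant's division $k$, and it is zero for unpaired divisions. The law of $Z^{i(t)}$ depends only on the two ability distributions, not on $v$ or $\gamma$.

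I will realize $\alpha^{i(t)}$ and $Z^{i(t)}$ (the division draw together with the uniform draw) independently of all outcomes of other battles and of the designer's devices. This is legitimate because types and equilibrium randomizations are independent across battles, and because $\mathcal G_t$ depends only on earlier outcomes and the devices. I then set $y_{\mathcal H}^{i(t)}=V_t^{\mathcal H}Z^{i(t)}$ for $\mathcal H\in\{\mathcal F,\mathcal G\}$. Proposition~\ref{prop:pbe-assembly} guarantees that this has the correct equilibrium conditional law in every equilibrium.

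With this independence, $\E[V_t^{\mathcal G}\mid\mathcal F_t,\alpha^{i(t)},Z^{i(t)}]=\E[V_t^{\mathcal G}\mid\mathcal F_t]=V_t^{\mathcal F}$. Multiplying by $Z^{i(t)}$, which is measurable with respect to the conditioning, yields \eqref{eq:expenditure-martingale-coupling}. Conditional Jensen then gives the mean-preserving spread, and dividing by $\alpha^{i(t)}$, which is also in the conditioning $\sigma$-field, transfers it to actual expenditure $e$.

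Part~2 follows by taking expectations in \eqref{eq:expenditure-martingale-coupling}: $\E y_{\mathcal F}^{i(t)}=\E y_{\mathcal G}^{i(t)}$, and likewise for $e$. Summing over contestants and battles gives equal expected aggregate expenditure.

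For part~3 I use \eqref{eq:design-as-information}. Under convex output costs, $\gamma$ is concave, so $\phi_t$ is concave by \eqref{eq:phiI}. This follows because $\phi_t$ is a positive mixture of the maps $v\mapsto\gamma(vz)$, which avoids needing \eqref{eq:phiI-second} at $v=0$. Conditional Jensen then gives
\[
\E[\phi_t(V_t^{\mathcal G})\mid\mathcal F_t]\le\phi_t(V_t^{\mathcal F}).
\]
Taking expectations and summing over $t$ yields $TE(\boldsymbol{\mathcal F})\geq TE(\boldsymbol{\mathcal G})$. The concave-cost case is the reverse, and linear costs give equality. Equivalently, one can apply Jensen player by player to $\gamma(y)$ using part~1.

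For strictness, suppose that on an event in $\mathcal F_t$ of positive probability the conditional law of $V_t^{\mathcal G}$ is nondegenerate and $\phi_t$ is strictly concave (or strictly convex) on its range. Then the conditional Jensen inequality is strict on that event, so the inequality in expectation is strict.

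The main obstacle I expect is justifying the coupling rigorously. I must show that the conditional law of equilibrium $y$ given the public history is exactly $V_t^{\mathcal H}$ times an independent draw of $Z$. This needs the uniqueness and assembly statements of Proposition~\ref{prop:pbe-assembly}, including the zero-spread case, where $y=0=0\cdot Z$ is consistent. It also needs the fact that contestant types and randomizations are independent of $\mathcal G_t$. The latter holds because signals only involve outcomes from strictly earlier blocks and independent devices, and the battle-$t$ player's type does not influence those outcomes.
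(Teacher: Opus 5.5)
Your proposal is correct and follows the paper's proof essentially step for step. The shared steps are the tower property $V_t^{\mathcal F}=\E[V_t^{\mathcal G}\mid\mathcal F_t]$, a single level-invariant draw of $(\alpha^{i(t)},Z^{i(t)})$ taken independently of the public history so that $y_{\mathcal H}^{i(t)}=V_t^{\mathcal H}Z^{i(t)}$, iterated expectations for the expenditure equalities, and conditional Jensen applied to $\phi_t$ (which inherits the curvature of $\gamma$) for the output ranking and its strict version; your explicit division-plus-uniform construction of $Z^{i(t)}$ and your treatment of the zero-spread case only spell out what the paper leaves implicit.
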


\begin{proof}
Fix $t$ and $i(t)$. By Lemma~\ref{lem:linear-reduction}, the joint
distribution of the contestant's ability $\alpha^{i(t)}$ and normalized
ability-scaled expenditure $Z^{i(t)}:=y^{i(t)}/v$ is independent of every
positive effective prize spread $v$. Draw this pair independently of the
public history and use the same draw under both profiles. Equilibrium
ability-scaled expenditure can then be represented as
\[
y_{\mathcal F}^{i(t)}=V_t^{\mathcal F}Z^{i(t)},
\qquad
y_{\mathcal G}^{i(t)}=V_t^{\mathcal G}Z^{i(t)},
\]
with both expressions equal to zero when the corresponding spread is
zero. Because
$V_t^{\mathcal F}=\E[V_t^{\mathcal G}\mid\mathcal F_t]$ by the tower
property, and battle-$t$ types and randomizations are independent of the
history determining the spreads,
\[
\E\!\left[
y_{\mathcal G}^{i(t)}
\mid
\mathcal F_t,\alpha^{i(t)},Z^{i(t)}
\right]
=
Z^{i(t)}\E[V_t^{\mathcal G}\mid\mathcal F_t]
=
y_{\mathcal F}^{i(t)}.
\]
Dividing by the positive realized ability gives the corresponding
martingale coupling for actual expenditure. Iterated expectations yield
the expenditure equalities in part~2. Finally,
$V_t^{\mathcal F}=\E[V_t^{\mathcal G}\mid\mathcal F_t]$, and $\phi_t$
has the same curvature as $\gamma$. Conditional Jensen's inequality,
summed over battles, therefore proves part~3; the stated strictness
condition is the corresponding strict-Jensen condition.
\end{proof}

The theorem clarifies why expenditure neutrality alone does not make the
output comparisons immediate. Equality of expected aggregate expenditure
does not rank a sum of nonlinear transformations of individual
expenditures. The economically operative restriction is the
player-by-player martingale coupling in
\eqref{eq:expenditure-martingale-coupling}, which follows from nested
information and level-invariant normalized expenditure. Beyond the
maintained independence assumptions, the proof uses only two properties
of the component game: outcome probabilities are
independent of the level of the effective prize spread, and normalized
ability-scaled expenditure has a level-invariant distribution. It
therefore applies whenever the component game delivers those properties.
Subsection~\ref{subsec:continuous-types} verifies them for smooth
continuous types, and Subsection~\ref{subsec:homogeneous-contests} derives
them from degree-zero homogeneity of a general contest technology.

\section{Outcome Disclosure}\label{sec:fixed-disclosure}

This section applies Theorem~\ref{thm:information-refinement} to compare
no prior-outcome disclosure and full prior-outcome disclosure for a fixed
ordered temporal structure $\mathcal P$. Disclosure expands public
information at fixed decision dates; temporal refinement, studied next,
creates additional decision and disclosure dates.

\subsection{Effective prize spreads}\label{subsec:disclosure-prize-spreads}

Under no prior-outcome disclosure, players in battle $t$ observe no outcomes from earlier blocks. Therefore, the effective prize spread in battle $t$ is the unconditional expectation of $Y_t$:
\begin{equation}
V_t^N
=\E[Y_t]
=\pi_t+V\Theta_A(n;\mathcal T\setminus\{t\}).
\label{eq:VtN}
\end{equation}
By~\eqref{eq:theta}, $\Theta_A(n;\mathcal T\setminus\{t\})$ is the ex ante probability that battle $t$ is pivotal for the allocation of the team prize. The expected total output in battle $t$ under no prior-outcome disclosure is therefore
\begin{equation}
TE_t^N=\phi_t(V_t^N).
\label{eq:TEN}
\end{equation}

Now consider full prior-outcome disclosure. Let $\mathcal P_\ell^-:=\bigcup_{q<\ell}\mathcal B_q$ denote the set of battles in preceding blocks. Before block $\ell$, the publicly observed outcome history is the random vector $\mathcal H_\ell^{\mathcal P}=(W_s)_{s\in\mathcal P_\ell^-}$, which generates the public information $\sigma$-field $\mathcal F_\ell^{\mathcal P}=\sigma(\mathcal H_\ell^{\mathcal P})$.

For a realized history $h_\ell=(w_s)_{s\in\mathcal P_\ell^-}$, let $N_\ell^A(h_\ell)=\sum_{s\in\mathcal P_\ell^-}\mathbf 1\{w_s=A\}$ denote the number of previously disclosed wins by team $A$, and for $t\in\mathcal B_\ell$ define the set of unresolved battles other than $t$ by $R_{\ell,t}^{\mathcal P}=\mathcal T\setminus(\mathcal P_\ell^-\cup\{t\})$, which contains the other battles in the current block as well as all battles in later blocks. Under full disclosure, the effective prize spread in battle $t$ conditional on history $h_\ell$ is
\begin{equation}
V_t^{F,\mathcal P}(h_\ell)
=
\E\left[
Y_t
\mid
\mathcal H_\ell^{\mathcal P}=h_\ell
\right]
=
\pi_t
+
V\Theta_A\!\left(
n-N_\ell^A(h_\ell);
R_{\ell,t}^{\mathcal P}
\right).
\label{eq:VtF}
\end{equation}
Here, $\Theta_A\!\left(n-N_\ell^A(h_\ell); R_{\ell,t}^{\mathcal P}\right)$ is the conditional probability that battle $t$ is pivotal given history $h_\ell$: after team $A$ has accumulated $N_\ell^A(h_\ell)$ wins in preceding blocks, it must win exactly $n-N_\ell^A(h_\ell)$ of the other unresolved battles for the outcome of battle $t$ to determine the winner of the team prize. The expected total output in battle $t$ under full prior-outcome disclosure is therefore
\begin{equation}
TE_t^{F}(\mathcal P)
=
\E\left[
\phi_t\left(
V_t^{F,\mathcal P}
\left(\mathcal H_{\ell(t)}^{\mathcal P}\right)
\right)
\right].
\label{eq:TEF}
\end{equation}

\subsection{Disclosure ranking}\label{subsec:disclosure-ranking}

The key relation between the two disclosure policies follows from the law of iterated expectations:
\begin{equation}
V_t^N=\E\left[V_t^{F,\mathcal P}\left(\mathcal H_{\ell(t)}^{\mathcal P}\right)\right].
\label{eq:mean-preserving-disclosure}
\end{equation}
Thus, relative to no disclosure, full prior-outcome disclosure induces a mean-preserving spread of the effective prize spread. It changes the dispersion of the incentive faced by the players, but not its mean.

Define total expected output under no disclosure and full disclosure, respectively, by
\[
TE^N=\sum_{t\in\mathcal T}TE_t^N,
\qquad
TE^F(\mathcal P)=\sum_{t\in\mathcal T}TE_t^F(\mathcal P).
\]

\begin{proposition}[Disclosure and expected output]
\label{prop:disclosure-ranking}
Fix any ordered temporal structure
$\mathcal P=(\mathcal B_1,\ldots,\mathcal B_L)$. Then:
\begin{enumerate}[leftmargin=*]
    \item If output costs are convex, then $TE^N\geq TE^F(\mathcal P)$, so no prior-outcome disclosure is weakly optimal between the two policies.

    \item If output costs are concave, then $TE^N\leq TE^F(\mathcal P)$, so full prior-outcome disclosure is weakly optimal between the two policies.

    \item If output costs are linear, then $TE^N=TE^F(\mathcal P)$.
\end{enumerate}
The inequalities in parts~1 and~2 are strict if, for at least one battle, the full-disclosure effective prize spread is nondegenerate and the corresponding output response is strictly curved in the relevant direction over its range of effective prize spreads.
\end{proposition}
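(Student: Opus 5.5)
The plan is to obtain Proposition~\ref{prop:disclosure-ranking} as a direct application of Theorem~\ref{thm:information-refinement}, with the no-disclosure profile as the coarse profile and the full-disclosure profile as the fine one. Fix $\mathcal P$. Proposition~\ref{prop:pbe-assembly} says the recorded outcome vector has the same product distribution under both policies, so both designs live on one probability space with independent Bernoulli outcomes $(W_t)$ with parameters $(\mu_t)$.

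On that space, define $\mathcal F_t$ as the trivial $\sigma$-field for every $t$ (policy $N$). Define $\mathcal G_t=\mathcal F_{\ell(t)}^{\mathcal P}=\sigma(\mathcal H_{\ell(t)}^{\mathcal P})$ (policy $F$). Both are admissible: they are generated by committed, public, common signals depending only on outcomes in blocks before $\ell(t)$. The trivial $\sigma$-field is contained in every $\sigma$-field, so $\boldsymbol{\mathcal G}$ refines $\boldsymbol{\mathcal F}$.

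Next I identify the spreads. Since $\mathcal F_t$ is trivial, $V_t^{\mathcal F}=\E[Y_t]$. This equals $V_t^N$ in \eqref{eq:VtN}, because $Y_t=\pi_t+V\mathbf 1\{D_t\}$ and $\Prob(D_t)=\Theta_A(n;\mathcal T\setminus\{t\})$ by the product structure.

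Similarly, $V_t^{\mathcal G}=\E[Y_t\mid\mathcal H_{\ell(t)}^{\mathcal P}]$. Given the history, $D_t$ requires exactly $n-N_{\ell}^A$ wins among $R_{\ell,t}^{\mathcal P}$. Those outcomes are independent of the disclosed ones, which gives \eqref{eq:VtF}. The recording convention at zero spread keeps each $\mu_s$ history-independent, so this independence is legitimate.

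With these identifications, \eqref{eq:design-as-information} gives $TE(\boldsymbol{\mathcal F})=TE^N$ and $TE(\boldsymbol{\mathcal G})=TE^F(\mathcal P)$. Part~3 of Theorem~\ref{thm:information-refinement} then yields parts~1--3 of the proposition.

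For a self-contained check, \eqref{eq:mean-preserving-disclosure} follows from the tower property. Jensen's inequality applied to $\phi_t$ then gives the result directly. By \eqref{eq:phiI} and \eqref{eq:phiI-second}, $\phi_t$ is concave, convex, or linear exactly when $\gamma$ is, which corresponds to convex, concave, or linear output costs. Summing over $t$ completes the argument.

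For strictness, conditioning on the trivial $\sigma$-field reduces "nondegenerate conditional on $\mathcal F_t$ with positive probability" to plain nondegeneracy of $V_t^{F,\mathcal P}$. This matches the strictness clause of the theorem. Strict Jensen for a strictly concave or convex $\phi_t$ on the range of a nondegenerate random variable then gives a strict inequality in that battle. The remaining battles give weak inequalities in the same direction.

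The main obstacle is the bookkeeping in the spread identification: showing that conditioning on the disclosed history leaves the unresolved outcomes independent with unchanged parameters. This rests on the level invariance in Lemma~\ref{lem:linear-reduction}(iii) together with the zero-spread recording convention. I will handle it by citing Proposition~\ref{prop:pbe-assembly} part~2, which already establishes the product distribution for any admissible policy.
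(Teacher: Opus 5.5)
Your proposal is correct and follows the paper's proof: you take the trivial (no-disclosure) profile as the coarse profile and $(\mathcal F_{\ell(t)}^{\mathcal P})_{t\in\mathcal T}$ as the fine one, note the nesting and the martingale relation \eqref{eq:mean-preserving-disclosure}, and invoke Theorem~\ref{thm:information-refinement} for both the rankings and the strictness clause. Your spread identifications and direct Jensen check only spell out what the paper leaves implicit; note that the argument needs only the forward implication from the curvature of $\gamma$ to that of $\phi_t$, so your ``exactly when'' claims more than the proof uses.
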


\begin{proof}
Take the coarse profile in Theorem~\ref{thm:information-refinement} to be
the uninformative no-disclosure profile and the fine profile to be
$(\mathcal F_{\ell(t)}^{\mathcal P})_{t\in\mathcal T}$. The nesting is
immediate, and the martingale relation is
\eqref{eq:mean-preserving-disclosure}, so
the theorem gives the three rankings and the stated strictness conditions.
\end{proof}

Intuitively, disclosure reallocates incentive intensity across histories without changing its average. Under no disclosure, players evaluate a single ex ante effective prize spread because they cannot distinguish high- from low-pivotality histories. Full disclosure makes incentives state contingent, raising the spread when the battle is more likely to matter and lowering it when it is less likely to matter. Whether this reallocation is desirable depends on how output responds to stronger incentives. With convex output costs the output response is concave, so the output gained by raising the spread in high-pivotality histories is outweighed by the output lost in low-pivotality histories; smoothing incentives is therefore beneficial, favoring no disclosure. With concave output costs the response is convex and the opposite force prevails: concentrating incentives in high-pivotality histories yields a net gain, favoring full disclosure. With linear output costs the gains and losses exactly offset.

Theorem~\ref{thm:information-refinement} shows why this conclusion is
stronger than expenditure neutrality: full disclosure generates a
player-by-player martingale spread, which supplies the ordering required
by Jensen's inequality.

Proposition~\ref{prop:disclosure-ranking} compares the two extremes, and the ranking extends to a broader class of committed \emph{public and common garblings of previously resolved outcomes}. Fix $\mathcal P$, and suppose that before each block $\ell$ both players in every battle of that block observe the same public signal
\[
S_\ell
=
s_\ell\!\left(
\mathcal H_\ell^{\mathcal P},
S_1,\ldots,S_{\ell-1},
\xi_\ell
\right),
\]
where the preceding-signal list is empty for $\ell=1$. The exogenous randomization device $\xi_\ell$ is independent of all model primitives and equilibrium randomizations, and the measurable signal rules $s_1,\ldots,s_L$ are publicly announced and committed to ex ante. The signal may therefore depend on resolved outcomes, earlier signals, and independent noise, but not on unresolved battle outcomes.\footnote{For example, the designer could disclose only whether team $A$ has accumulated at least $r$ wins so far, pooling histories with different values of $N_\ell^A$; or, as a stochastic garbling, reveal the history with probability $q$ and send an uninformative message with probability $1-q$, independently of the realized history.} Since players observe and recall all past signals, their public information when block $\ell$ is played is $\mathcal G_\ell=\sigma(S_1,\ldots,S_\ell)$. For $t\in\mathcal B_\ell$, write $V_t^S=\E[Y_t\mid\mathcal G_\ell]$ for the induced effective prize spread.

\begin{corollary}[Optimality over public outcome garblings]\label{cor:garbling}
Fix any ordered temporal structure $\mathcal P$ and any committed, public, and common garbling of previously resolved outcomes in the class defined above, and let $TE^S(\mathcal P)$ denote the induced expected total output under the corresponding signal policy. Then:
\begin{enumerate}[leftmargin=*]
    \item If output costs are convex, then $TE^N\geq TE^S(\mathcal P)\geq TE^F(\mathcal P)$.
    \item If output costs are concave, then $TE^N\leq TE^S(\mathcal P)\leq TE^F(\mathcal P)$.
    \item If output costs are linear, then $TE^N=TE^S(\mathcal P)=TE^F(\mathcal P)$.
\end{enumerate}
\end{corollary}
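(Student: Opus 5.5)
The plan is to apply Theorem~\ref{thm:information-refinement} twice, once to the pair (no disclosure, garbling $S$) and once to the pair (garbling $S$, full disclosure). Both comparisons are instances of the theorem's nesting hypothesis. The three chains of inequalities then follow by concatenating the two resulting rankings, and the linear case follows by concatenating the two equalities.

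First, I will check that the garbling profile $(\mathcal G_{\ell(t)})_{t\in\mathcal T}$ is admissible in the sense of Section~\ref{sec:information-refinement}.
\begin{itemize}
\item Each $S_\ell$ is a measurable function of $\mathcal H_\ell^{\mathcal P}$, of earlier signals, and of independent devices $\xi_\ell$. Earlier signals are themselves, recursively, functions of $\mathcal H_q^{\mathcal P}\subseteq\mathcal H_\ell^{\mathcal P}$ for $q<\ell$ and of $\xi_1,\dots,\xi_{\ell-1}$.
\item Hence $\mathcal G_\ell\subseteq\sigma(\mathcal H_\ell^{\mathcal P},\xi_1,\ldots,\xi_\ell)$. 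The signal is committed, public, and common, and it depends only on outcomes resolved before block $\ell$ together with exogenous noise, which is exactly the admissibility requirement.
\item By Proposition~\ref{prop:pbe-assembly}, the outcome vector has the same product law under this policy, so all three designs live on one enlarged probability space.
\end{itemize}

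The coarse inclusion, $\{\emptyset,\Omega\}\subseteq\mathcal G_{\ell(t)}$, is trivial, so the theorem gives the ranking of $TE^N$ against $TE^S(\mathcal P)$.

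The fine comparison is the main obstacle. $\mathcal G_\ell$ is not literally contained in $\mathcal F_\ell^{\mathcal P}=\sigma(\mathcal H_\ell^{\mathcal P})$, because it also contains the noise $\xi$. I will therefore take the fine profile to be $\tilde{\mathcal F}_\ell=\sigma(\mathcal H_\ell^{\mathcal P},\xi_1,\ldots,\xi_\ell)$, which contains $\mathcal G_\ell$. This profile is still admissible: it amounts to full disclosure plus public revelation of the designer's randomization devices.

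I then need $\E[Y_t\mid\tilde{\mathcal F}_{\ell}]=V_t^{F,\mathcal P}(\mathcal H_\ell^{\mathcal P})$. The argument is as follows.
\begin{itemize}
\item $Y_t$ depends only on outcomes of battles other than $t$.
\item The unresolved outcomes in $R_{\ell,t}^{\mathcal P}$ are independent of $(\mathcal H_\ell^{\mathcal P},\xi)$. This holds because, by Proposition~\ref{prop:pbe-assembly} and the recording convention, each battle is won by $A$ with probability $\mu_s$ at every public history, and the devices are independent of all primitives and equilibrium randomizations.
\end{itemize}
Consequently the effective spreads, and hence $\phi_t$ and expected output, under $\tilde{\mathcal F}$ coincide with those under full disclosure, giving $TE^{\tilde F}=TE^F(\mathcal P)$.

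Applying the theorem to $\mathcal G\subseteq\tilde{\mathcal F}$ then yields the ranking of $TE^S(\mathcal P)$ against $TE^F(\mathcal P)$. Concatenating with the first comparison gives parts~1--3, using that $\phi_t$ inherits the curvature of $\gamma$.
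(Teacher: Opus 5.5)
Your proposal is correct and follows the paper's route: the text right after the corollary describes this same double application of Theorem~\ref{thm:information-refinement}, with the fine profile enlarged to $\sigma(\mathcal H_\ell^{\mathcal P},\xi_1,\ldots,\xi_\ell)$, on which conditioning reproduces the full-disclosure spread. The appendix writes the argument out more directly: it uses the tower property to get $V_t^S=\E[V_t^{F,\mathcal P}\mid\mathcal G_{\ell(t)}]$ and $V_t^N=\E[V_t^S]$, and then applies Jensen's inequality twice, without routing through the theorem's player-level coupling.
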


\begin{proof}
    See Appendix~\ref{app:garbling}.
\end{proof}

By construction, every policy in this class is a garbling of full
disclosure and is at least as informative as no disclosure, consistent
with the Blackwell comparison of experiments \citep{blackwell-1953}.
Because battle-winning probabilities are invariant to both the level of
any positive effective prize spread and the public history
(Subsections~\ref{subsec:component-battles} and~\ref{subsec:overall-contest}), disclosure does not alter the
distribution of battle outcomes; it only changes how finely players
distinguish among histories when assessing pivotality. Consequently, more informative disclosure increases the dispersion of
each battle's effective prize spread without changing its mean, which
places every intermediate garbling between the no-disclosure and
full-disclosure extremes. Corollary~\ref{cor:garbling} therefore follows
by applying Theorem~\ref{thm:information-refinement} twice on the enlarged
probability space: first from no disclosure to the garbling and then from
the garbling to the $\sigma$-field generated by the full history and the
designer's independent randomization. Conditioning on the latter produces
the same effective prize spread as full disclosure.

\section{Temporal Refinement}
\label{sec:temporal-refinement}

The preceding section evaluates prior-outcome disclosure policies under a fixed temporal structure. We now fix the disclosure policy and study how refining the temporal structure---splitting blocks under full disclosure---affects expected total output.

\subsection{Reduction to full disclosure}\label{subsec:temporal-as-info}

Under no prior-outcome disclosure, every ordered temporal structure generates the same effective prize spreads, because previously resolved outcomes are never observed. This common benchmark is replicated under full prior-outcome disclosure by the simultaneous structure $\mathcal P^{\mathrm{sim}}=(\mathcal T)$, which has a single block and therefore no preceding outcome to disclose. Every no-disclosure temporal structure is thus output-equivalent to $\mathcal P^{\mathrm{sim}}$ under full disclosure, and it suffices to conduct the temporal comparison within the full-disclosure family.

Given a full-disclosure temporal structure $\mathcal P=(\mathcal B_1,\ldots,\mathcal B_L)$, let $\mathcal F_t^{\mathcal P}:=\mathcal F_{\ell(t)}^{\mathcal P}$ denote the public information available when battle $t$ is played, which under full disclosure consists of the outcomes of all battles resolved in preceding blocks. The effective prize spread in battle $t$ is $V_t^{\mathcal P}=\E[Y_t\mid\mathcal F_t^{\mathcal P}]$,\footnote{Because full prior-outcome disclosure is maintained throughout this section, $V_t^{\mathcal P}$ is shorthand for $V_t^{F,\mathcal P}$.} and the expected total output under $\mathcal P$ is
\begin{equation}
TE(\mathcal P)
=
\sum_{t\in\mathcal T}
\E\left[\phi_t\left(V_t^{\mathcal P}\right)\right].
\label{eq:temporal-objective}
\end{equation}

\subsection{Refinement ranking}\label{subsec:refinement-ranking}

\begin{definition}[Temporal refinement]\label{def:temporal-refinement}
Let $\mathcal P=(\mathcal B_1,\ldots,\mathcal B_L)$ and $\mathcal P'=(\mathcal C_1,\ldots,\mathcal C_M)$ be two ordered partitions of $\mathcal T$. We say that $\mathcal P'$ is a \emph{temporal refinement} of $\mathcal P$ if there exist integers $0=j_0<j_1<\cdots<j_L=M$ such that $\mathcal B_\ell=\bigcup_{q=j_{\ell-1}+1}^{j_\ell}\mathcal C_q$ for every $\ell=1,\ldots,L$. If $M>L$, the refinement is strict.
\end{definition}

A temporal refinement preserves the order of the original blocks but may split some of them into smaller sub-blocks.\footnote{For example, $(\{1\},\{2\},\{3\})$ is a strict refinement of $(\{1,2\},\{3\})$, with $j_0=0$, $j_1=2$, and $j_2=3$.} Splitting a block creates additional disclosure dates, so the public information available before any subsequent battle weakly expands: $\mathcal F_t^{\mathcal P}\subseteq\mathcal F_t^{\mathcal P'}$ for every battle $t$. By the law of iterated expectations,
\begin{equation}
V_t^{\mathcal P}
=
\E\left[V_t^{\mathcal P'}\mid \mathcal F_t^{\mathcal P}\right].
\label{eq:temporal-refinement-mps}
\end{equation}
A refinement therefore lets players distinguish previously pooled histories, inducing a mean-preserving spread of each battle's effective prize spread. This extends the informational logic of disclosure to temporal structures.

\begin{proposition}[Temporal refinement and expected output]
\label{prop:temporal-refinement}
Suppose $\mathcal P'$ is a refinement of $\mathcal P$. Then:
\begin{enumerate}[leftmargin=*]
    \item If output costs are convex, then $TE(\mathcal P)\geq TE(\mathcal P')$, so temporal refinement weakly reduces expected total output.

    \item If output costs are concave, then $TE(\mathcal P)\leq TE(\mathcal P')$, so temporal refinement weakly increases expected total output.

    \item If output costs are linear, then $TE(\mathcal P)=TE(\mathcal P')$.
\end{enumerate}
The inequalities in parts~1 and~2 are strict if, for at least one battle $t$, $V_t^{\mathcal P'}$ is nondegenerate conditional on $\mathcal F_t^{\mathcal P}$ with positive probability and the corresponding output response is strictly curved in the relevant direction over its range of effective prize spreads.
\end{proposition}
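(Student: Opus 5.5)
The plan is to obtain Proposition~\ref{prop:temporal-refinement} as a direct application of Theorem~\ref{thm:information-refinement}, in the same way Proposition~\ref{prop:disclosure-ranking} was obtained. I will take the coarse profile to be $\boldsymbol{\mathcal F}=(\mathcal F_t^{\mathcal P})_{t\in\mathcal T}$ and the fine profile to be $\boldsymbol{\mathcal G}=(\mathcal F_t^{\mathcal P'})_{t\in\mathcal T}$, both under full disclosure. Both are admissible: each $\mathcal F_t^{\mathcal P}$ is generated by the outcomes of battles in blocks preceding $\ell(t)$ under $\mathcal P$, and similarly for $\mathcal P'$. By Proposition~\ref{prop:pbe-assembly}, the recorded outcome vector has the same product distribution under both structures. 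Hence the two profiles live on one probability space, and by~\eqref{eq:temporal-objective} we have $TE(\mathcal P)=TE(\boldsymbol{\mathcal F})$ and $TE(\mathcal P')=TE(\boldsymbol{\mathcal G})$ as in~\eqref{eq:design-as-information}.

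The one substantive step is the nesting $\mathcal F_t^{\mathcal P}\subseteq\mathcal F_t^{\mathcal P'}$ for every $t$, which I will verify from Definition~\ref{def:temporal-refinement}. Fix $t$ with $t\in\mathcal B_\ell$ under $\mathcal P$ and $t\in\mathcal C_q$ under $\mathcal P'$. Since $\mathcal B_\ell=\bigcup_{r=j_{\ell-1}+1}^{j_\ell}\mathcal C_r$, we have $j_{\ell-1}<q\le j_\ell$. Therefore every $\mathcal C_r$ with $r\le j_{\ell-1}$ precedes $\mathcal C_q$, so
\[
\mathcal P_\ell^-=\bigcup_{\ell'<\ell}\mathcal B_{\ell'}=\bigcup_{r\le j_{\ell-1}}\mathcal C_r\subseteq\bigcup_{r<q}\mathcal C_r={\mathcal P'}_q^-.
\]
Under full disclosure, $\mathcal F_t^{\mathcal P}=\sigma\bigl((W_s)_{s\in\mathcal P_\ell^-}\bigr)$ is generated by a subvector of the outcome vector generating $\mathcal F_t^{\mathcal P'}$, which gives the inclusion. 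The martingale relation~\eqref{eq:temporal-refinement-mps} then follows by the tower property.

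With the nesting established, parts~1--3 follow from part~3 of Theorem~\ref{thm:information-refinement}, using that $\phi_t$ inherits the curvature of $\gamma$ by~\eqref{eq:phiI-second}. The strictness clause is exactly the theorem's strictness condition with $\mathcal F_t=\mathcal F_t^{\mathcal P}$ and $V_t^{\mathcal G}=V_t^{\mathcal P'}$.

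I expect no real obstacle; the only care needed is the index bookkeeping in the nesting step. One point to note there is that battles in the same sub-block of $\mathcal B_\ell$ as $t$ are resolved simultaneously with $t$ under both structures, so they never enter either $\sigma$-field, while battles in earlier sub-blocks enter only the finer one. I will also remark that zero spreads are covered by the recording convention, so the coupling $y=VZ$ in the theorem remains valid.
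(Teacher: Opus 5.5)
Your proposal is correct and follows the paper's proof: the paper also notes $\mathcal F_t^{\mathcal P}\subseteq\mathcal F_t^{\mathcal P'}$, takes \eqref{eq:temporal-refinement-mps} as the martingale relation, and invokes Theorem~\ref{thm:information-refinement} for the three rankings and the strictness condition. Your index check, $j_{\ell-1}<q$ giving $\mathcal P_\ell^-\subseteq\bigcup_{r<q}\mathcal C_r$, simply writes out the nesting that the paper asserts in the text after Definition~\ref{def:temporal-refinement}.
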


\begin{proof}
For every battle $t$,
$\mathcal F_t^{\mathcal P}\subseteq
\mathcal F_t^{\mathcal P'}$, so the information profile under
$\mathcal P'$ refines the profile under $\mathcal P$.
Equation~\eqref{eq:temporal-refinement-mps} is the associated martingale
relation. Theorem~\ref{thm:information-refinement} gives the three
rankings, with strictness under the stated strict-Jensen conditions.
\end{proof}

Proposition~\ref{prop:temporal-refinement} extends the preceding disclosure
logic to temporal structure. By making additional resolved outcomes
available before later battles, a temporal refinement induces not only a
mean-preserving spread of the corresponding effective prize spreads but,
by Theorem~\ref{thm:information-refinement}, a player-by-player
mean-preserving spread of equilibrium ability-scaled expenditure. The
curvature of the conversion technology then determines whether splitting
blocks raises or lowers expected total output. Thus, disclosure and
temporal refinement operate through the same pivotality-information
channel, and the comparison does not require symmetry across component
battles.

\begin{corollary}[Unrestricted choice of temporal structure]
\label{cor:unrestricted-clustering}
Suppose the designer may choose any ordered partition of the fixed set
of battles and maintains full disclosure between blocks.
\begin{enumerate}[leftmargin=*]
    \item If output costs are convex, the simultaneous structure
    $\mathcal P^{\mathrm{sim}}=(\mathcal T)$ maximizes expected total output over
    all ordered partitions.
    \item If output costs are concave, at least one maximizer is a
    singleton-block structure.

    \item If output costs are linear, all ordered partitions
    generate the same expected total output.
\end{enumerate}
\end{corollary}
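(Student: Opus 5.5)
The plan is to reduce every ordered partition to a comparison with either the coarsest structure $\mathcal P^{\mathrm{sim}}$ or a finest (singleton-block) structure. Proposition~\ref{prop:temporal-refinement} does all the ranking once each comparison is between nested structures. The difficulty is that refinement is only a partial order on ordered partitions. Two arbitrary ordered partitions need not be comparable, so I cannot rank them directly; instead I will route each through an extreme.

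\emph{Convex case.} First I observe that every ordered partition $\mathcal P=(\mathcal B_1,\ldots,\mathcal B_L)$ is a temporal refinement of $\mathcal P^{\mathrm{sim}}=(\mathcal T)$: take $j_0=0$ and $j_1=L$ in Definition~\ref{def:temporal-refinement}, so that $\mathcal T=\bigcup_{q=1}^{L}\mathcal B_q$. Part~1 of Proposition~\ref{prop:temporal-refinement} then gives $TE(\mathcal P^{\mathrm{sim}})\geq TE(\mathcal P)$ for every $\mathcal P$, which is part~1 of the corollary. Equivalently, one can invoke Theorem~\ref{thm:information-refinement} directly, since $\mathcal F_t^{\mathcal P^{\mathrm{sim}}}$ is trivial and is contained in every $\mathcal F_t^{\mathcal P}$.

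\emph{Concave case.} Given any $\mathcal P$, I construct a singleton-block refinement $\mathcal P'$ by replacing each block $\mathcal B_\ell$ with its elements listed as consecutive singletons, in any internal order. This $\mathcal P'$ satisfies Definition~\ref{def:temporal-refinement} with $j_\ell=\sum_{q\le\ell}|\mathcal B_q|$. Part~2 of Proposition~\ref{prop:temporal-refinement} yields $TE(\mathcal P')\geq TE(\mathcal P)$. Since there are finitely many singleton-block structures (at most $(2n+1)!$), let $\mathcal P^*$ be one maximizing $TE$ among them. Then for every $\mathcal P$,
\[
TE(\mathcal P)\leq TE(\mathcal P')\leq TE(\mathcal P^*),
\]
so $\mathcal P^*$ maximizes over all ordered partitions. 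The step needing care is precisely this one: singleton structures with different orders are not mutually nested, so the argument cannot single out a specific order. Finiteness is what guarantees that a maximizer exists, and only existence is claimed. That is why the statement says ``at least one maximizer.''

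\emph{Linear case.} Here $TE(\mathcal P)=TE(\mathcal P^{\mathrm{sim}})$ for every $\mathcal P$, by part~3 of Proposition~\ref{prop:temporal-refinement} applied to the refinement pair $(\mathcal P^{\mathrm{sim}},\mathcal P)$. Hence all ordered partitions give the same value. I expect no real obstacle anywhere beyond verifying the two refinement constructions against Definition~\ref{def:temporal-refinement}.
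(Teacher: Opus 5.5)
Your proposal is correct and follows the paper's own proof: every ordered partition refines $\mathcal P^{\mathrm{sim}}$, every ordered partition has a singleton-block refinement that keeps the block order, and Proposition~\ref{prop:temporal-refinement} then gives all three parts. Your finiteness step for the concave case (taking the best of the at most $(2n+1)!$ singleton orderings) spells out a point the paper's proof leaves implicit.
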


\begin{proof}
Every ordered partition is a refinement of
$\mathcal P^{\mathrm{sim}}=(\mathcal T)$. Moreover, every nonsingleton
ordered partition has a singleton-block refinement obtained by ordering
the battles within each block while preserving the order of the
original blocks. The conclusions follow from
Proposition~\ref{prop:temporal-refinement}.
\end{proof}

\section{Expenditure Neutrality}\label{sec:expenditure}

Theorem~\ref{thm:information-refinement} already implies equal expected
expenditure for designs ordered by information refinement. This section
establishes a broader neutrality result: expected aggregate expenditure is
the same even across temporal structures whose information sets are not
nested. It also gives the common expenditure level in closed form.

One form of invariance is already in hand:
Proposition~\ref{prop:pbe-assembly} fixes the effective prize spread at
every on-path public history and pins down play in every battle, so
expected aggregate expenditure is the same in every equilibrium of a
given design. The next result shows that expenditure is also invariant
\emph{across} designs---under no disclosure, under full disclosure, under
every committed public garbling, and under every ordered temporal
structure---and gives the common value in closed form.

\begin{proposition}[Expenditure neutrality]
\label{prop:expenditure-neutrality}
Fix any admissible technology $\gamma$. Expected aggregate expenditure is the
same under no prior-outcome disclosure, under full prior-outcome
disclosure, and under every committed public garbling in the class of
Corollary~\ref{cor:garbling}, and it is the same for every ordered
temporal structure. Its value is
\[
\sum_{t\in\mathcal T}
c_t\bigl(\pi_t+V\Theta_A(n;\mathcal T\setminus\{t\})\bigr),
\]
where $c_t\geq0$ depends only on the primitives of battle $t$.
\end{proposition}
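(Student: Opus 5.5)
The plan is to reduce the statement to linearity of expected expenditure in the effective prize spread, and then to use the tower property for every design separately. No refinement between designs is needed.

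\textbf{Step 1: expenditure is linear in the spread.} By Lemma~\ref{lem:linear-reduction}(i), in battle $t$ with spread $v\geq0$ each contestant's expected expenditure is the one given in Lemma~\ref{lem:linear-equilibrium}, for every admissible $\gamma$. The battle's total expected expenditure is therefore $c_t v$, where
\[
c_t=\sum_{k=1}^{m_t}\frac{p_k^{A(t)}}{\alpha_k^{A(t)}}\left(I_k^t+I_{k+1}^t\right)\geq0 .
\]
This coefficient depends only on the divisional representation, which Lemma~\ref{lem:linear-equilibrium} shows is determined by the two ability distributions alone. At $v=0$ both players expend zero, so the formula $c_t v$ holds there too.

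\textbf{Step 2: conditioning on the public history.} Fix any design: an ordered temporal structure together with no disclosure, full disclosure, or a committed public garbling. Let $\mathcal F_t$ be the public information when battle $t$ is played. Battle-$t$ types and equilibrium randomizations are independent of that history. By Proposition~\ref{prop:pbe-assembly}, play in battle $t$ at every on-path history is the component equilibrium at spread $V_t^{\mathcal F}=\E[Y_t\mid\mathcal F_t]$. Hence the conditional expected expenditure in battle $t$ is $c_tV_t^{\mathcal F}$.

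\textbf{Step 3: averaging.} Iterated expectations give $\E[V_t^{\mathcal F}]=\E[Y_t]$. The outcome vector has the same product distribution with parameters $(\mu_s)$ under every design (Proposition~\ref{prop:pbe-assembly}). So $\E[Y_t]=\pi_t+V\Theta_A(n;\mathcal T\setminus\{t\})$, exactly as in~\eqref{eq:VtN}. Summing over $t$ yields the displayed value. It depends on neither the temporal structure nor the disclosure rule, which covers non-nested temporal structures without comparing their information sets.

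\textbf{Main obstacle.} The step needing care is the claim that the law of $Y_t$ is design-invariant even though designs differ in which histories are observed. This requires the history-independent winning probability $\mu_t$, including at zero-spread histories, which the recording convention of Subsection~\ref{subsec:overall-contest} supplies. It also requires independence of the designer's garbling noise, so that $\E[Y_t]$ is computed under the same product measure on the enlarged space.
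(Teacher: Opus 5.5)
Your proposal is correct and follows essentially the same route as the paper's proof in Appendix~\ref{app:expenditure}. Expected expenditure in battle $t$ is $c_tv$, with the same coefficient taken from the second expression in~\eqref{eq:linear-expenditure}. Iterated expectations then give $\E[c_tV_t]=c_t\E[Y_t]$. Proposition~\ref{prop:pbe-assembly} makes $\E[Y_t]=\pi_t+V\Theta_A(n;\mathcal T\setminus\{t\})$ invariant across designs, and like the paper you note that only unconditional means are used, so non-nested temporal structures are covered.
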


\begin{proof}
    See Appendix~\ref{app:expenditure}.
\end{proof}

The mechanism is a linearity that output does not share. Equilibrium
ability-scaled expenditure in a component battle scales proportionally
with the effective prize spread, so expected expenditure in battle $t$ is
linear in that spread. More informative disclosure or a finer temporal
structure raises the spread after some histories and lowers it after
others without changing its unconditional mean, and a linear function of
the spread is insensitive to that redistribution. Output responds to the
spread through $\phi_t$, which inherits the curvature of $\gamma$ and is
linear in the linear benchmark. This curvature difference is the source
of the rankings in
Sections~\ref{sec:fixed-disclosure} and~\ref{sec:temporal-refinement}.

Proposition~\ref{prop:expenditure-neutrality} is only
a first-moment statement. Equality of expected aggregate expenditure does
not itself rank
$\sum_{t\in\mathcal T}\sum_{i\in\{A,B\}}
\gamma(\alpha^{i(t)}e^{i(t)})$, because total output is a sum of
player-level nonlinear transformations rather than a transformation of
aggregate expenditure. The disclosure and refinement rankings use the
stronger martingale coupling in
Theorem~\ref{thm:information-refinement}; for nonnested temporal structures,
that coupling generally does not exist and expenditure neutrality alone
has no directional implication for output.

Expenditure neutrality is therefore not what separates our results from
\citet{fu-lu-pan-2015}. The distinction concerns the response of the
outcome-determining variable to the effective prize spread.
In their model, effort---the counterpart of output in our
model---determines the outcome of a component battle and has linear cost.
Equilibrium effort is therefore proportional to the common prize spread,
so expected total effort is invariant to changes in its dispersion. In
our model, the outcome-determining variable is output $x$, whose cost
$\gamma^{-1}(x)/\alpha$ is generally nonlinear. The output response
$\phi_t(v)$ is then generally nonlinear in the prize spread, so disclosure and
temporal refinement can change expected total output by changing the
dispersion of the prize spread while preserving its mean. At the same
time, the transformation
$y=\gamma^{-1}(x)=\alpha e$ restores linear costs, which explains why
expected aggregate expenditure remains neutral.

Disclosure and temporal refinement therefore redistribute effective prize
spreads across histories while leaving expected aggregate expenditure
unchanged. Nested redistributions admit the convex-order comparison that
ranks nonlinear output. We next examine the robustness of the key
properties underlying these results to alternative assumptions.

\section{Extensions}\label{sec:extensions}

This section extends the main analysis in three directions. First, we establish that the disclosure and temporal-refinement results hold under smooth continuous-type ability distributions. Second, we verify that the fixed-structure disclosure ranking is robust to alternative assumptions about type observability. Third, we show that the mechanism extends beyond the deterministic highest-output rule to general degree-zero contest technologies.

\subsection{Continuous-type abilities}\label{subsec:continuous-types}

The finite-support assumption is used to derive the baseline component-battle representation in Section~\ref{sec:equilibrium}; the disclosure and temporal-refinement arguments rely instead on the component-battle properties established in Subsection~\ref{subsec:component-battles}. We now establish the same properties under a smooth continuous-type specification.

Suppose player $i(t)$ draws a private quantile $\theta^{i(t)}\sim\operatorname{Unif}[0,1]$, independently across players and battles, with ability $\alpha^{i(t)}(\theta^{i(t)})$, where $\alpha^{i(t)}:[0,1]\to\mathbb R_{++}$ is continuously differentiable with $(\alpha^{i(t)})'>0$ on $[0,1]$.

Fix battle $t$ and suppose $\int_0^1(1/\alpha^{A(t)}(s))\,ds\leq\int_0^1(1/\alpha^{B(t)}(s))\,ds$; the reverse case follows by interchanging the player labels. Let $\underline{\theta}_t\in[0,1]$ be the unique value satisfying
\[
\int_0^1\frac{ds}{\alpha^{A(t)}(s)}
=
\int_{\underline{\theta}_t}^1
\frac{ds}{\alpha^{B(t)}(s)},
\]
and define the strictly increasing matching map
$G_t:[0,1]\to[\underline{\theta}_t,1]$ by
\begin{equation}
\int_{\theta}^{1}\frac{ds}{\alpha^{A(t)}(s)}
=
\int_{G_t(\theta)}^{1}
\frac{ds}{\alpha^{B(t)}(s)}.
\label{eq:continuous-matching}
\end{equation}
This map is continuously differentiable and satisfies $G_t(0)=\underline{\theta}_t$, $G_t(1)=1$, and $\alpha^{A(t)}(\theta)G_t'(\theta)=\alpha^{B(t)}(G_t(\theta))$. Aggregating matched abilities, define
\begin{equation}
K_t(\theta)
:=
\int_0^{\theta}
\alpha^{A(t)}(s)G_t'(s)\,ds
=
\int_0^{\theta}
\alpha^{B(t)}(G_t(s))\,ds.
\label{eq:continuous-K}
\end{equation}

The following proposition uses this matching construction to characterize
the component-battle equilibrium and its output response. The existence,
uniqueness, and monotonicity results of \citet{amann-leininger-1996} apply
to the induced two-player asymmetric all-pay auction;
Appendix~\ref{app:continuous-types} identifies that equilibrium directly
for the present specification, in which the two induced value supports
generally differ.

\begin{proposition}[Continuous-type component battles]
\label{prop:continuous-types}
For every effective prize spread $v>0$, the component battle has a
unique Bayesian Nash equilibrium, up to changes on null sets, and that
equilibrium is monotone.
In terms of ability-scaled expenditure, the equilibrium is represented almost everywhere by
\[
y^{A(t)}(\theta;v)=vK_t(\theta),
\qquad
y^{B(t)}(G_t(\theta);v)=vK_t(\theta),
\]
while types $\eta\in[0,\underline{\theta}_t)$ of player $B(t)$ choose
$y=0$. At $v=0$, every type uniquely chooses zero expenditure.
Expected total output is
\begin{equation}
\phi_t^{\mathrm{ct}}(v)
=
\int_0^1
\gamma\!\left(vK_t(\theta)\right)
\left[1+G_t'(\theta)\right]\,d\theta,
\qquad v\geq0.
\label{eq:continuous-phi}
\end{equation}
Under this orientation, player $A(t)$'s winning probability is $\mu_t^{\mathrm{ct}}=\int_0^1G_t(\theta)\,d\theta$, which lies in $(0,1)$ and is independent of $v>0$. Moreover, $\phi_t^{\mathrm{ct}}$ has the same curvature as $\gamma$.
\end{proposition}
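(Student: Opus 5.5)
The plan is to reduce to the linear benchmark exactly as in Lemma~\ref{lem:linear-reduction}(i). The map $e\mapsto y=\alpha e$ is a bijection, and $\gamma$ is strictly increasing, so the component battle is strategically equivalent to an all-pay auction with linear costs. In that auction player $i(t)$ of quantile $\theta$ has value $\alpha^{i(t)}(\theta)v$. Equivalently, in inverse-value coordinates he pays $y/(\alpha v)$ per unit of $y$. For $v=0$ any positive $y$ yields a strictly negative payoff, so zero is the unique choice. For $v>0$ I would invoke \citet{amann-leininger-1996} for existence, uniqueness up to null sets, and monotonicity. It then suffices to verify that the stated profile is an equilibrium, because uniqueness does the rest. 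The value distributions are continuous with positive densities on the interval images of $\alpha^{i(t)}$, which places us within their framework.

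For verification I would check that the candidate induces a common support $[0,vK_t(1)]$ with no atoms except $B$'s atom at zero of mass $\underline\theta_t$. On this support each type must be indifferent along the equilibrium bid and unwilling to deviate. Take a bid $y=vK_t(\theta)$. Player $A$ wins against $B$-types below $G_t(\theta)$, so his win probability is $G_t(\theta)$, which includes the atom at zero since $G_t(0)=\underline\theta_t$. Player $B$ at $G_t(\theta)$ wins against $A$-types below $\theta$, so his win probability is $\theta$. An $A$-type with ability $a$ choosing the bid indexed by $\theta'$ has payoff $vG_t(\theta')-vK_t(\theta')/a$. The first-order condition gives $aG_t'(\theta')=K_t'(\theta')=\alpha^{A(t)}(\theta')G_t'(\theta')$, so it holds at $\theta'=\theta$. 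Because $K_t'/G_t'=\alpha^{A(t)}$ is increasing, the objective is single-peaked in $\theta'$, which yields global optimality on the support. Bids above $vK_t(1)$ are dominated. The same argument works for $B$, using $K_t'(\theta)=\alpha^{B(t)}(G_t(\theta))$ and the identity $\alpha^{A}G_t'=\alpha^{B}\circ G_t$, which follows from differentiating~\eqref{eq:continuous-matching}. $B$-types below $\underline\theta_t$ choose zero. For them any positive bid $vK_t(\theta')$ earns $v\theta'-vK_t(\theta')/\alpha^{B}(\eta)$. Here $K_t'(\theta')=\alpha^B(G_t(\theta'))\geq\alpha^B(\underline\theta_t)>\alpha^B(\eta)$, so this payoff is negative. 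The main obstacle is this step. It requires the boundary bookkeeping: $B$'s atom at zero, the orientation condition ensuring $\underline\theta_t\in[0,1]$, and the case $\underline\theta_t=0$.

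With the equilibrium in hand, the remaining claims are computations. For output, $A$'s quantile $\theta$ produces $\gamma(vK_t(\theta))$. For $B$, substituting $\eta=G_t(\theta)$ with $d\eta=G_t'(\theta)d\theta$ gives $\int_0^1\gamma(vK_t(\theta))G_t'(\theta)d\theta$, and the atom contributes $\gamma(0)=0$; adding the two gives~\eqref{eq:continuous-phi}. For the winning probability, type $\theta$ of $A$ beats exactly the $B$-quantiles below $G_t(\theta)$, so $\mu_t^{\mathrm{ct}}=\int_0^1G_t$. The integrand is free of $v$, and it lies strictly in $(0,1)$ because $G_t$ is strictly increasing with $G_t(1)=1$. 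Finally, for each $\theta$ the map $v\mapsto\gamma(vK_t(\theta))$ has the curvature of $\gamma$, since $K_t(\theta)\geq0$. The weights $1+G_t'$ are positive, so $\phi^{\mathrm{ct}}_t$ inherits concavity, convexity, or linearity, exactly as in~\eqref{eq:phiI-second}.
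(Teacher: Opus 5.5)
Your proposal is correct. Your verification of the candidate profile and your computations of $\phi_t^{\mathrm{ct}}$, $\mu_t^{\mathrm{ct}}$ and the curvature claim are essentially the paper's. That verification covers the slope $vG_t'(\theta')\bigl(1-\alpha^{A(t)}(\theta')/a\bigr)$ for player $A(t)$, the analogous slope for active types of $B(t)$, and the negative slope for $B$-types below $\underline\theta_t$.

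Where you genuinely differ is uniqueness. You import it wholesale from Amann and Leininger and then only check a guessed profile.

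The paper takes from them only the qualitative structure of an arbitrary equilibrium:
\begin{itemize}
\item monotone representatives,
\item a common gapless support $[0,\bar y]$,
\item no atoms above zero, and at most one atom at zero.
\end{itemize}
It then uses monotonicity to define the matching map $\tilde G$ between equal-bid types. The two players' indifference conditions force $\alpha^{A(t)}(\theta)\tilde G'(\theta)=\alpha^{B(t)}(\tilde G(\theta))$ with $\tilde G(1)=1$. Since $\alpha^{B(t)}$ is continuously differentiable and $\alpha^{A(t)}$ is bounded away from zero, this terminal-value problem has a unique solution, so $\tilde G=G_t$ and $\tilde y^{A(t)}=vK_t$.

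That route buys two things. First, it derives the matching equation rather than guessing it. Second, it establishes uniqueness directly for the present specification. Here the induced value supports $[\alpha^{A(t)}(0)v,\alpha^{A(t)}(1)v]$ and $[\alpha^{B(t)}(0)v,\alpha^{B(t)}(1)v]$ generally differ and have positive lower endpoints, a feature the paper explicitly flags.

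Your route is shorter, but your sentence that continuous positive densities ``place us within their framework'' carries the entire uniqueness claim. To keep it, you would need to confirm that their uniqueness theorem is stated for distinct supports. Otherwise, the terminal-value argument is the natural patch.
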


\begin{proof}
See Appendix~\ref{app:continuous-types}.
\end{proof}

Proposition~\ref{prop:continuous-types} establishes the key
component-battle properties needed for the preceding analysis: the
output response inherits the curvature of $\gamma$, the battle-winning
probability is invariant to the positive effective prize spread, and
equilibrium ability-scaled expenditure scales proportionally with that spread.
Together with independent type draws and the zero-spread
outcome-recording convention, these properties allow the preceding
results to extend to continuous types. The following corollary states
this extension.

\begin{corollary}[Continuous-type extension of the design results]
\label{cor:continuous-type-extension}
Under the smooth continuous-type private-information specification of
this subsection, set
\[
(\phi_t,\mu_t)
=
(\phi_t^{\mathrm{ct}},\mu_t^{\mathrm{ct}})
\qquad
\text{for every }t\in\mathcal T.
\]
Maintaining the zero-spread outcome-recording convention of
Subsection~\ref{subsec:overall-contest}, both parts of
Proposition~\ref{prop:pbe-assembly} continue to hold.
Theorem~\ref{thm:information-refinement} and the conclusions of
Proposition~\ref{prop:disclosure-ranking},
Corollary~\ref{cor:garbling},
Proposition~\ref{prop:temporal-refinement},
and Corollary~\ref{cor:unrestricted-clustering} therefore hold under the continuous-type
specification above, with all other assumptions, qualifications, and
strictness conditions unchanged.
Proposition~\ref{prop:expenditure-neutrality} also continues to hold.
\end{corollary}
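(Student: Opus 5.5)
The plan is to check that every ingredient the finite-support arguments actually used has a continuous-type counterpart, and then re-run those arguments unchanged. The later proofs draw on the component battle in only three ways: (a) uniqueness of the component equilibrium at each positive spread and zero expenditure at a zero spread; (b) level invariance of the winning probability $\mu_t\in(0,1)$; and (c) level invariance of the joint law of ability and normalized ability-scaled expenditure $Z=y/v$, together with a response $\phi_t$ whose curvature matches that of $\gamma$. Proposition~\ref{prop:continuous-types} supplies all three once $(\phi_t,\mu_t)=(\phi_t^{\mathrm{ct}},\mu_t^{\mathrm{ct}})$. For (c), the representation $y^{A(t)}=vK_t(\theta^{A(t)})$ and $y^{B(t)}=vK_t(G_t^{-1}(\eta))$ on $[\underline\theta_t,1]$, with zero below, shows that $Z^{i(t)}$ is a fixed measurable function of the quantile. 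Hence the pair $(\alpha^{i(t)},Z^{i(t)})$ has a law that does not depend on $v$.

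The first step is Proposition~\ref{prop:pbe-assembly}, argued as in Appendix~\ref{app:pbe-assembly-proof}. For existence, at each public history every current battle is played according to the unique equilibrium of Proposition~\ref{prop:continuous-types} at its effective spread, with independent quantile draws. The zero-spread recording convention then makes team $A$ win battle $t$ with probability $\mu_t^{\mathrm{ct}}$ at every history. Outcomes are therefore independent Bernoulli$(\mu_t^{\mathrm{ct}})$ draws, $\Theta_A$ is well defined and positive, and the spreads are the conditional expectations of $Y_t$ computed from this product measure. This is a fixed point, and it is consistent with sequential rationality because $Y_t$ depends only on other battles. For uniqueness, I would argue by backward induction across blocks. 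Fix any PBE and any on-path history. Earlier outcomes have the product law, so the spread in each current battle is determined. Uniqueness up to null sets then pins down the distribution of output and the winning probability. The one point needing care is that ``up to null sets'' is harmless: modifying strategies on a measure-zero set of quantiles changes neither the outcome law nor the conditional output distribution. For continuous signal spaces the statement holds almost surely, as in the baseline.

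With Proposition~\ref{prop:pbe-assembly} in hand, Theorem~\ref{thm:information-refinement} goes through verbatim. Its proof uses only the coupling $y_{\mathcal H}=V_t^{\mathcal H}Z$ built from (c), the tower property, and conditional Jensen applied to $\phi_t^{\mathrm{ct}}$, which has the curvature of $\gamma$ by Proposition~\ref{prop:continuous-types}. Proposition~\ref{prop:disclosure-ranking}, Corollary~\ref{cor:garbling}, Proposition~\ref{prop:temporal-refinement} and Corollary~\ref{cor:unrestricted-clustering} are all derived from the theorem plus martingale identities. Those identities, \eqref{eq:mean-preserving-disclosure} and \eqref{eq:temporal-refinement-mps}, depend only on the product outcome law. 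So these results carry over with the same strictness conditions.

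For Proposition~\ref{prop:expenditure-neutrality}, I would check that expected expenditure in battle $t$ is linear in $v$. It equals $v\,c_t^{\mathrm{ct}}$ with
\[
c_t^{\mathrm{ct}}=\int_0^1\frac{K_t(\theta)}{\alpha^{A(t)}(\theta)}\,d\theta+\int_{\underline\theta_t}^1\frac{K_t(G_t^{-1}(\eta))}{\alpha^{B(t)}(\eta)}\,d\eta ,
\]
which is finite because $\alpha$ is bounded away from zero on $[0,1]$. Taking expectations of $V_t^{\mathcal H}$, which always equal $\pi_t+V\Theta_A(n;\mathcal T\setminus\{t\})$, gives the same closed form under every design, including nonnested temporal structures.

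I expect the main obstacle to be the uniqueness half of Proposition~\ref{prop:pbe-assembly}. The induction must handle the null-set qualification and histories reached with probability zero. The remaining steps are bookkeeping.
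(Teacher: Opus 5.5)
Your proposal is correct and follows essentially the paper's route: Proposition~\ref{prop:continuous-types} supplies three facts, namely the level-invariant winning probability, the level-invariant law of $(\alpha^{i(t)},Z^{i(t)})$, and the curvature of $\phi_t^{\mathrm{ct}}$, after which the finite-support proofs are rerun unchanged (including the iterated-expectations expenditure argument, where your explicit $c_t^{\mathrm{ct}}$ makes the paper's unnamed constant concrete). One small tightening in the uniqueness step: the spread is not pinned down by the law of earlier outcomes alone, and what drives the induction (as in the paper) is that the recorded winning probability equals $\mu_t^{\mathrm{ct}}$ at every history whatever the spread, which gives the product law of all outcomes before any spread is computed.
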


\begin{proof}
See Appendix~\ref{app:cor:continuous-type-extension}.
\end{proof}

\subsection{Type observability}\label{subsec:type-discussion}

The main private-information analysis assumes that each player observes
only his own type. We now consider two alternative type-observability
regimes and examine whether the fixed-structure disclosure ranking
continues to hold. Under \emph{local type observability}, the two players
in a component battle observe each other's types before choosing
expenditure levels but do not observe type realizations in other
battles. Under \emph{global type revelation}, realized type pairs in all
battles are publicly observed before expenditure levels are chosen. The
complete-information calculations condition on realized abilities and
therefore apply to either ability specification considered above.

For these extensions, assume $\pi_t>0$ for every battle. This restriction
ensures that every effective prize spread is strictly positive and avoids
the need for a type-dependent outcome-recording convention at a zero
spread. The two observability regimes affect the analysis in different
ways. Local type observability changes the component-battle equilibrium,
whereas global type revelation additionally changes beliefs about the
pivotality of a battle through the realized types in other battles. We
consider these regimes only for the fixed-structure disclosure
comparison; we do not compare expected total output across
type-observability regimes or extend the temporal-refinement result to
them.

Both regimes are exogenous informational environments rather than design
instruments. The designer in our model discloses outcomes, not
abilities. Disclosing abilities is a different instrument: it changes
the component-battle equilibrium itself, and hence both the
battle-winning probability $\mu_t$ and the output response $\phi_t$,
whereas outcome disclosure leaves beliefs about abilities and the
component-game primitives unchanged and moves only the effective prize
spread \citep[on ability disclosure in contests
and all-pay auctions, see][]{fu-jiao-lu-2014,zhang-zhou-2016,lu-ma-wang-2018,serena-2022}.
Corollary~\ref{cor:type-observability} shows that
the outcome-disclosure ranking is robust to which of the two alternative
type-information environments prevails; ranking expected total output
across those environments is a separate question that we do not
address.

\textbf{Local type observability.}\quad
Suppose the two players in each component battle observe each other's types before choosing expenditure levels, while type realizations in other battles remain unobserved. Conditional on the realized type pair, the component battle is a complete-information all-pay auction, and averaging over that pair gives an ex ante output response $\phi_t^C$ and battle-winning probability $\mu_t^C$. As shown in Lemma~\ref{lem:complete-allpay} and Appendix~\ref{app:observable-types}, $\phi_t^C$ has the same curvature relation to $\gamma$ as the private-information response, while $\mu_t^C$ is independent of the positive level of the effective prize spread. Replacing $(\phi_t,\mu_t)$ by $(\phi_t^C,\mu_t^C)$, full prior-outcome disclosure again induces a mean-preserving spread of the effective prize spread relative to no disclosure, and the curvature argument underlying Proposition~\ref{prop:disclosure-ranking} applies unchanged within this regime.
\medskip

\textbf{Global type revelation.}\quad
Now suppose all realized type pairs are publicly observed before expenditure levels are chosen, so that realized types in other battles affect the probability that the current battle is pivotal. Conditional on the full type profile $\boldsymbol{\alpha}$, the disclosure comparison has the same structure: under no prior-outcome disclosure the effective prize spread in battle $t$ is $\E[Y_t\mid\boldsymbol{\alpha}]$, whereas under full prior-outcome disclosure it additionally conditions on the realized outcomes of preceding blocks, and by the law of iterated expectations the former is the conditional mean of the latter. Full disclosure therefore induces a mean-preserving spread conditional on the realized type profile. Because the complete-information output response conditional on the current battle's realized type pair has the same curvature relation to $\gamma$, the same conditional mean-preserving-spread argument applies.

\begin{corollary}[Disclosure under alternative type observability]
\label{cor:type-observability}
Fix any ordered temporal structure $\mathcal P$ and suppose $\pi_t>0$ for
every $t\in\mathcal T$. Under either the finite-support or smooth
continuous-type ability specification, and under either local type
observability or global type revelation, the fixed-structure disclosure
ranking in Proposition~\ref{prop:disclosure-ranking} continues to hold. In
particular, no prior-outcome disclosure generates weakly higher expected
total output under convex output costs, full
prior-outcome disclosure generates weakly higher expected total
output under concave output costs, and the two policies are
output-equivalent under linear output costs. Within either
observability regime, expected aggregate expenditure is also the same
under no and full prior-outcome disclosure.
\end{corollary}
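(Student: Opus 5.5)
The plan is to handle the two observability regimes separately. In each, I will reduce the comparison to the two component-battle properties that drive Theorem~\ref{thm:information-refinement}: level invariance of the battle-winning probability and an output response with the curvature of $\gamma$. The disclosure ranking then follows by conditional Jensen applied to a mean-preserving spread of the effective prize spread.

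The first ingredient is the complete-information component battle. Fix a realized ability pair $(a,b)$ with $a\geq b$ and a spread $v>0$. Multiplying payoffs by ability turns the battle into a general-cost all-pay auction with common cost $\gamma^{-1}$ and values $av,bv$, as in Section~\ref{sec:example}. By \citet{kaplan-wettstein-2006} (this is what I take Lemma~\ref{lem:complete-allpay} to record):
\begin{itemize}[leftmargin=*]
    \item the equilibrium outcome is unique;
    \item the stronger player wins with probability $1-b/(2a)$, which does not depend on $v$;
    \item expected output is $\phi(v;a,b)=\frac1v(\frac1a+\frac1b)\int_0^{bv}\gamma(y)\,dy=(\frac1a+\frac1b)b\int_0^1\gamma(bvz)\,dz$, which has the curvature of $\gamma$;
    \item expected expenditure is linear in $v$.
\end{itemize}
Ties at $a=b$ are symmetric, so the winning probability is $1/2$ there.

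\emph{Local type observability.} Types in other battles are unobserved, so the effective spread in battle $t$ is still $\E[Y_t\mid\mathcal F_t]$ for the public outcome information. I define $\mu_t^C=\E[\text{win prob}(\alpha^{A(t)},\alpha^{B(t)})]$ and $\phi_t^C(v)=\E[\phi(v;\alpha^{A(t)},\alpha^{B(t)})]$. Both are averages of level-invariant and curvature-inheriting objects, so they inherit those properties. Since $\pi_t>0$, every spread is positive and no recording convention is needed. Outcomes across battles are then independent Bernoulli$(\mu_t^C)$ at every history, by the assembly argument of Proposition~\ref{prop:pbe-assembly}. Hence \eqref{eq:VtN}, \eqref{eq:VtF} and \eqref{eq:mean-preserving-disclosure} hold with $\mu_t^C$ in place of $\mu_t$. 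Summing conditional Jensen over battles gives the three rankings, and linearity of expected expenditure gives expenditure equality.

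\emph{Global type revelation.} I condition on the full type profile $\boldsymbol\alpha$. Given $\boldsymbol\alpha$, battle $t$ is won by $A$ with probability $\mu_t(\boldsymbol\alpha)$, which depends only on $(\alpha^{A(t)},\alpha^{B(t)})$ and is independent of $v>0$. Because $\pi_t>0$, outcomes are conditionally independent across battles at every history. Under no disclosure the spread is $V_t^N(\boldsymbol\alpha)=\E[Y_t\mid\boldsymbol\alpha]$, and under full disclosure it is $V_t^F=\E[Y_t\mid\boldsymbol\alpha,\mathcal H_{\ell(t)}]$. The tower property gives $V_t^N(\boldsymbol\alpha)=\E[V_t^F\mid\boldsymbol\alpha]$. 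Conditional on $\boldsymbol\alpha$, expected output in battle $t$ is $\phi(V;\alpha^{A(t)},\alpha^{B(t)})$, which is concave, convex, or linear in $V$ as $\gamma$ is. Jensen conditional on $\boldsymbol\alpha$, followed by integration over $\boldsymbol\alpha$ and summation over $t$, yields the ranking. Expenditure is linear in the spread with a coefficient depending only on the battle-$t$ types, so the same tower argument equalizes expected expenditure.

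The main obstacle is justifying that equilibria in these regimes are pinned down so that outcome distributions are history-independent product measures, the analogue of Proposition~\ref{prop:pbe-assembly}. I would handle it by using uniqueness of the complete-information equilibrium outcome at every positive spread (guaranteed by $\pi_t>0$). From this it follows that every PBE plays the component equilibrium at each history. The conditional spreads are then well defined by backward induction over blocks, which fixes the joint outcome law. The continuous-type case changes only the type distribution being averaged over, so the same argument covers both ability specifications.
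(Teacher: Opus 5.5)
Your proposal is correct and follows essentially the same route as the paper's proof in Appendix~\ref{app:observable-types}. The paper first records the complete-information battle: level-invariant winning probability $q_t^C(\boldsymbol\alpha_t)$, output $\psi_t(\cdot\mid\boldsymbol\alpha_t)$ with the curvature of $\gamma$, and expenditure $\kappa_t(\boldsymbol\alpha_t)v$. It then applies iterated expectations and Jensen after averaging over the battle-$t$ type pair (local observability) or conditional on $\boldsymbol\alpha$ (global revelation), and uses the same tower argument for expenditure. Your explicit treatment of why every equilibrium yields product-form outcomes, via uniqueness at positive spreads secured by $\pi_t>0$, spells out a step the paper leaves implicit.
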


\begin{proof}
See Appendix~\ref{app:observable-types}.
\end{proof}

\subsection{General degree-zero contest technologies}
\label{subsec:homogeneous-contests}

The design results do not require the deterministic highest-output rule.
They require scale invariance of the component contest in the linearly
costly action $y=\alpha e=\gamma^{-1}(x)$ or, conditional on the ability
pair, equivalently in expenditure $e$. Replace the deterministic rule in
battle $t$ by contest success functions
$q_t^i(y^A,y^B;\boldsymbol{\alpha}_t)$, $i\in\{A,B\}$, with
$q_t^A+q_t^B=1$. These functions may depend on the realized ability pair.
Resolution lotteries are independent across battles and independent of
type draws and contestants' mixed-strategy randomizations. For every ability pair and
every $\lambda>0$, assume
\begin{equation}
q_t^i(\lambda y^A,\lambda y^B;\boldsymbol{\alpha}_t)
=
q_t^i(y^A,y^B;\boldsymbol{\alpha}_t),
\qquad i\in\{A,B\}.
\label{eq:degree-zero-contest}
\end{equation}
The designer continues to value output $x^i=\gamma(y^i)$; the functions
$q_t^i$ determine only the recorded winner. The rule at
$(y^A,y^B)=(0,0)$ is fixed across positive effective prize spreads; at a
zero spread, we retain the outcome-recording convention of
Subsection~\ref{subsec:overall-contest}. Conditional on his own ability,
contestant $i(t)$ then maximizes
\[
v\,\E\!\left[
q_t^i(y^{i(t)},y^{-i(t)};\boldsymbol{\alpha}_t)
\mid \alpha^{i(t)}
\right]
-
\frac{y^{i(t)}}{\alpha^{i(t)}}.
\]
Assume that at $v=1$ an equilibrium outcome distribution---the joint
distribution of realized abilities, ability-scaled expenditures, and the
recorded winner---exists and is unique across all equilibria, whether pure
or mixed. Assume also that the expectations in
Proposition~\ref{prop:homogeneous-contests} are finite.

\begin{proposition}[Degree-zero component contests]
\label{prop:homogeneous-contests}
Under~\eqref{eq:degree-zero-contest}, the equilibrium outcome distribution at
every $v>0$ is unique and is obtained by scaling each ability-scaled expenditure
in any $v=1$ equilibrium by $v$. Thus, for normalized equilibrium actions
$(Z^{A(t)},Z^{B(t)})$ whose joint distribution is independent of $v$,
\[
y_t^{i(t)}(v)=vZ^{i(t)},
\qquad i\in\{A,B\}.
\]
The equilibrium battle-winning probability is independent of $v$, the
expected-output response is
\begin{equation}
\phi_t^q(v)
=
\E\!\left[
\gamma\!\left(vZ^{A(t)}\right)
+
\gamma\!\left(vZ^{B(t)}\right)
\right],
\label{eq:homogeneous-output-response}
\end{equation}
and expected aggregate expenditure in battle $t$ is
\begin{equation}
v\kappa_t^q,
\qquad
\kappa_t^q
=
\E\!\left[
\frac{Z^{A(t)}}{\alpha^{A(t)}}
+
\frac{Z^{B(t)}}{\alpha^{B(t)}}
\right].
\label{eq:homogeneous-expenditure}
\end{equation}
Consequently, $\phi_t^q$ inherits the concavity, convexity, or linearity
of $\gamma$. Under the maintained independence assumptions and zero-spread
recording convention, the preceding equilibrium-assembly, disclosure,
temporal-refinement, and expenditure results hold with $(\phi_t,\mu_t)$ replaced
by the degree-zero-contest objects and with $c_t=\kappa_t^q$.
\end{proposition}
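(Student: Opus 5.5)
The argument is a rescaling bijection between equilibria at spread $v$ and equilibria at spread $1$. Fix $v>0$ and a (possibly mixed) strategy profile at spread $v$. Let $\sigma$ denote the profile obtained by dividing every ability-scaled expenditure by $v$. Contestant $i(t)$'s objective at spread $v$ is
\[
v\,\E\!\left[q_t^i(y^{i(t)},y^{-i(t)};\boldsymbol{\alpha}_t)\mid\alpha^{i(t)}\right]-\frac{y^{i(t)}}{\alpha^{i(t)}}
=
v\left(\E\!\left[q_t^i(z^{i(t)},z^{-i(t)};\boldsymbol{\alpha}_t)\mid\alpha^{i(t)}\right]-\frac{z^{i(t)}}{\alpha^{i(t)}}\right),
\]
where $z=y/v$. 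The equality uses~\eqref{eq:degree-zero-contest} with $\lambda=1/v$. The rule at $(0,0)$ is fixed across spreads, and $(0,0)$ maps to itself, so the identity also holds there. The right-hand side is a positive multiple of the objective at spread $1$ against the rescaled opponent strategy. Hence a deviation is profitable at spread $v$ if and only if its rescaled version is profitable at spread $1$. Because this holds type by type, it covers Bayesian and mixed equilibria alike. So $\sigma$ is an equilibrium at $v$ if and only if its $1/v$-rescaling is an equilibrium at $1$.

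Uniqueness follows from this bijection. Every equilibrium at $v$ rescales to an equilibrium at $1$, and by hypothesis all equilibria at $1$ share one outcome distribution: the joint law of abilities, expenditures, and recorded winner. The recorded winner is unchanged by the rescaling, again by degree-zero homogeneity. Therefore every equilibrium at $v$ has the law of $(\boldsymbol{\alpha}_t, vZ^{A(t)}, vZ^{B(t)}, W_t)$, where $(\boldsymbol{\alpha}_t,Z,W_t)$ has the unique $v=1$ law. Existence at $v$ comes from rescaling an equilibrium at $1$.

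The individual claims then follow directly.
\begin{itemize}
\item \emph{Winning probability.} Battle-winning probability is the marginal law of $W_t$, which does not depend on $v$.
\item \emph{Output response.} Since $x=\gamma(y)=\gamma(vZ)$, this gives~\eqref{eq:homogeneous-output-response}.
\item \emph{Expenditure.} Since $e=y/\alpha=vZ/\alpha$, expected aggregate expenditure is $v\kappa_t^q$ as in~\eqref{eq:homogeneous-expenditure}; finiteness is assumed.
\item \emph{Curvature.} For each realization $z\ge0$, the map $v\mapsto\gamma(vz)$ is concave, convex, or linear as $\gamma$ is. Taking expectations preserves this, so $\phi_t^q$ inherits the curvature of $\gamma$.
\end{itemize}

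For the transfer to the design results, the Theorem~\ref{thm:information-refinement} proof used only two things: a level-invariant law of $(\alpha,Z)$ drawn independently of history, and history-independent winning probabilities. Both hold here. The resolution lotteries are independent across battles, so with the zero-spread convention the product-measure outcome representation of Proposition~\ref{prop:pbe-assembly} holds with $\mu_t$ the $v=1$ winning probability. Equilibrium assembly at each history then uses the uniqueness just shown. Disclosure, garbling, and refinement follow through the martingale relations~\eqref{eq:mean-preserving-disclosure} and~\eqref{eq:temporal-refinement-mps}. The expenditure-neutrality formula follows with $c_t=\kappa_t^q$, because expected expenditure is linear in the spread and the spread's mean equals $\pi_t+V\Theta_A(n;\mathcal T\setminus\{t\})$.

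The main obstacle is making the rescaling bijection rigorous for mixed and Bayesian strategies. Pushforward of strategy measures under $y\mapsto y/v$ must be checked to preserve the joint outcome law, including the recorded winner and ties at $(0,0)$. Uniqueness must also be stated for the outcome distribution rather than for strategies, since distinct strategies may induce the same outcome law.
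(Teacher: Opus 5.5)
Your proposal is correct and follows essentially the same route as the paper. You divide the payoff by $v$, use degree-zero homogeneity to get a $v$-free normalized game, treat rescaling by $v$ as a bijection between the pure and mixed equilibrium sets at $1$ and at $v$, invoke the $v=1$ uniqueness hypothesis, and read off the winning probability, \eqref{eq:homogeneous-output-response}, \eqref{eq:homogeneous-expenditure}, and the curvature from expectations of $v\mapsto\gamma(vZ)$. The one point the paper states that you leave implicit is that at $v=0$ zero expenditure is uniquely optimal (the payoff is $-y/\alpha$); together with the recording convention, this carries the assembly and design results through zero spreads.
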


\begin{proof}
For $v>0$, write $y^{i(t)}=vz^{i(t)}$. Dividing contestant $i(t)$'s
payoff by $v$ and using~\eqref{eq:degree-zero-contest} gives
\[
\E\!\left[
q_t^i(z^{i(t)},z^{-i(t)};\boldsymbol{\alpha}_t)
\mid \alpha^{i(t)}
\right]
-
\frac{z^{i(t)}}{\alpha^{i(t)}},
\]
which is independent of $v$. Scaling therefore defines bijections between the
pure- and mixed-strategy equilibrium sets at $v=1$ and $v$. Uniqueness of the
joint equilibrium distribution gives the stated representation. Winning-probability
invariance and equations~\eqref{eq:homogeneous-output-response}--\eqref{eq:homogeneous-expenditure}
follow immediately.
Equation~\eqref{eq:homogeneous-output-response} is a sum of expectations
of functions $v\mapsto\gamma(vZ)$, so it has the same weak curvature as
$\gamma$. The remaining conclusions follow from the level invariance of winning
probabilities and normalized actions and from battle independence. At $v=0$,
zero expenditure is uniquely optimal; the recording convention of
Subsection~\ref{subsec:overall-contest} preserves these properties.
\end{proof}

The class includes, for example, weighted Tullock technologies
\citep{tullock-1980}
\[
q_t^A(y^A,y^B)
=
\frac{\beta_A(y^A)^r}
{\beta_A(y^A)^r+\beta_B(y^B)^r},
\qquad
q_t^B=1-q_t^A,
\]
with $\beta_A,\beta_B>0$ and $r>0$, subject to a fixed rule at the origin
and to the maintained uniqueness assumption on the equilibrium outcome
distribution. More
generally, ratio-form technologies with impact functions homogeneous of
a common degree satisfy~\eqref{eq:degree-zero-contest}; see also
\citet{skaperdas-1996}.

Uniqueness in distribution is sufficient but not necessary. The same
conclusions follow if a scale-consistent equilibrium selection is fixed
across prize spreads. Without uniqueness or such a selection, scaling
still maps the equilibrium set at $v=1$ onto the equilibrium set at $v$,
but different selections can generate different outcome and output
distributions.

The location of the homogeneity restriction matters. If a primitive
contest success function is homogeneous of degree zero only in final
outputs $x^i=\gamma(y^i)$, then its composite success function is
$q_t^i(\gamma(y^A),\gamma(y^B);\boldsymbol{\alpha}_t)$. Degree-zero
homogeneity in $x$ alone does not imply~\eqref{eq:degree-zero-contest}
for an arbitrary nonlinear $\gamma$, because
$\gamma(vz^A)$ and $\gamma(vz^B)$ need not be the same multiple of
$\gamma(z^A)$ and $\gamma(z^B)$. The implication does hold if $\gamma$
is itself positively homogeneous---as with a power technology---or if
the contest rule depends only on the ranking of outputs. The original
highest-output rule is of the latter kind: strict monotonicity of
$\gamma$ makes ranking outputs equivalent to ranking $y$. Without this
composite scale invariance, equilibrium winning probabilities and
normalized actions may vary with $v$, so the disclosure and temporal
comparisons need not survive.

\section{Conclusion}\label{sec:conclusion}

In multi-battle team contests, outcome disclosure and temporal refinement
operate through a common channel: both make each battle's effective prize
spread weakly more dispersed across histories without changing its mean.
Scale invariance transmits this information ordering to equilibrium
behavior, making ability-scaled expenditure under the finer information
structure a mean-preserving spread of its coarser counterpart player by
player. The curvature of the output technology then delivers the design
ranking by Jensen's inequality. Convex output costs favor no disclosure and
coarser temporal structures; concave costs favor full disclosure and finer
ones; and linear costs make both comparisons neutral. Within the class of
committed, public, and common garblings of previously resolved outcomes, no
and full disclosure are the relevant bounds.

Because these instruments only redistribute effective prize spreads across
histories, expected aggregate expenditure is invariant across disclosure
policies and temporal structures. Mean neutrality alone does not rank output;
the sharper player-level convex-order relation does so for nested designs. The
same logic extends to degree-zero component contest technologies in the
linearly costly action, provided their equilibrium outcome distribution is
unique. The results also hold for finite-support and smooth continuous-type
ability distributions. Under strictly positive battle-specific prizes, the
fixed-structure disclosure ranking further extends to local type observability
and global type revelation. The optimal ordering of heterogeneous battles,
which is not governed by the refinement order, remains open.

\clearpage
\appendix
\section*{Appendix}

\section{Equilibrium Proofs}\label{app:component-proof}

\subsection{Proof of Lemmas~\ref{lem:linear-equilibrium} and~\ref{lem:linear-reduction}}\label{app:reduction-proof}

\emph{Strategic equivalence.} Fix battle $t$, a realized ability $\alpha$, and a positive effective prize spread $v$. Define ability-scaled expenditure $y=\alpha e=\gamma^{-1}(x)$. Because $\gamma$ is strictly increasing, comparing outputs is equivalent to comparing $y$, so the player's payoff can be written as
\[
v\Pr(\text{win at }y)+\frac v2\Pr(\text{tie at }y)-\frac{y}{\alpha}.
\]
Multiplying by the positive constant $\alpha$ preserves best responses and gives
\[
\alpha v\Pr(\text{win at }y)+\frac{\alpha v}{2}\Pr(\text{tie at }y)-y,
\]
which is the linear benchmark: an all-pay auction with value $\alpha v$ and linear cost $y$. Since $e\mapsto y=\alpha e$ is a bijection, equilibrium strategies correspond under this transformation, and the transformed game is independent of $\gamma$. Uniqueness therefore holds for the component battle because it holds for the benchmark, the latter being a two-player all-pay auction with discrete private values and linear costs, a special case of the environment for which \citet{siegel-2014} establishes a unique equilibrium. Output is recovered from $x=\gamma(y)$, which is the only role played by the technology. This proves part~(i) of Lemma~\ref{lem:linear-reduction}.

\emph{The divisional representation.} For player $i$, let $\{\hat\alpha_j^i,\hat p_j^i\}_{j=1}^{J_t^i}$ denote the primitive ability types and set $\lambda_j^i=\hat p_j^i/\hat\alpha_j^i$ and $\Lambda^i=\sum_{j}\lambda_j^i$. Represent player $i$'s distribution by $[0,\Lambda^i]$, partitioned into consecutive intervals of lengths $\lambda_j^i$ ordered from the highest ability to the lowest; align the two players' partitions at zero and take their common refinement on $[0,\min\{\Lambda^A,\Lambda^B\}]$. If segment $k$ has length $\lambda_k$ and carries abilities $\alpha_k^{A(t)}$ and $\alpha_k^{B(t)}$, assign masses $p_k^{i(t)}=\lambda_k\alpha_k^{i(t)}$. Then
\[
\frac{p_k^{A(t)}}{\alpha_k^{A(t)}}
=
\lambda_k
=
\frac{p_k^{B(t)}}{\alpha_k^{B(t)}},
\]
which is~\eqref{eq:regularized-identity}. Any residual interval belongs to at most one player and carries abilities weakly below that player's lowest paired ability. The construction preserves the mass of every primitive type, so division-level strategies aggregate back to the original type distribution, and it depends only on the two ability distributions, hence serves every $v>0$ and every admissible $\gamma$. This is the procedure of \citet{chen-lu-bai-2026}, specialized to the values $\alpha v$.

\emph{Equilibrium and output.} Fix $v>0$; by strategic equivalence it suffices to work with $y$. For active paired division $k$ of player $i(t)$, prescribe a uniform distribution on $[vI_{k+1}^t,vI_k^t]$ with density $1/(vp_k^{i(t)}\alpha_k^{-i(t)})$. By~\eqref{eq:regularized-identity}, $p_k^{i(t)}\alpha_k^{-i(t)}=p_k^{-i(t)}\alpha_k^{i(t)}$, so the two sums in~\eqref{eq:Ik} agree and $I_k^t$ is common to the two players. Moreover $v(I_k^t-I_{k+1}^t)=vp_k^{i(t)}\alpha_k^{-i(t)}$, so the prescribed density integrates to one and both players use the same ordered support intervals. Any unpaired lower divisions choose zero.

To verify optimality, let $\eta_t^{-i}$ denote any opponent mass at zero. If division $k$ of player $i(t)$ chooses $y\in[vI_{r+1}^t,vI_r^t]$, the probability mass of the opponent's active divisions weakly below $y$ is
\[
F_{\mathrm{act}}^{-i(t)}(y)
=
\sum_{s=r+1}^{m_t}p_s^{-i(t)}
+
\frac{y/v-I_{r+1}^t}{\alpha_r^{i(t)}}.
\]
Its payoff on this interval is therefore
\[
U_k(y)
=v\eta_t^{-i}+vF_{\mathrm{act}}^{-i(t)}(y)-\frac{y}{\alpha_k^{i(t)}},
\]
with derivative
\[
U_k'(y)=\frac{1}{\alpha_r^{i(t)}}-\frac{1}{\alpha_k^{i(t)}}.
\]
Because abilities are weakly decreasing in $k$, this derivative is nonnegative below division $k$'s assigned interval, zero on that interval, and nonpositive above it. Choosing above the common support only raises cost. If the opponent has mass at zero, choosing $y=0$ collects only half of that mass and is therefore no better than the limit from positive choices. Thus every active division best responds. For an unpaired division with ability $\alpha_0\leq\alpha_{m_t}^{i(t)}$, the same derivative is nonpositive on every positive interval, so zero is optimal.

Conditional on active division $k$, expected output is
\[
\int_{vI_{k+1}^t}^{vI_k^t}
\frac{\gamma(w)}{vp_k^{i(t)}\alpha_k^{-i(t)}}\,dw.
\]
Multiplying by $p_k^{i(t)}$, summing over active divisions and players, and changing variables $w=vz$ gives~\eqref{eq:phiI}. Unpaired divisions contribute zero. Aggregation over divisions carrying the same primitive ability yields an equilibrium of the original component battle.

The distribution of $y/v$ is independent of $v$, so the battle-winning probability is independent of every positive effective prize spread. Existence and the output formula follow from the construction above; uniqueness is part~(i) of Lemma~\ref{lem:linear-reduction}. At $v=0$, zero expenditure is uniquely optimal for each player and $\phi_t(0)=0$.

\subsection{Proof of Proposition~\ref{prop:pbe-assembly}}\label{app:pbe-assembly-proof}

\emph{Part 1.} Fix a public history before a block. Unresolved types retain their independent primitive distributions. Beliefs about resolved outcomes and signals follow Bayes' rule on path; at zero-probability histories, choose beliefs consistent with the committed recording and disclosure rules. For each current battle $t$, let $v_t=\E[Y_t\mid\text{public history}]$. Prescribe the component-battle equilibrium of Lemma~\ref{lem:linear-reduction} when $v_t>0$, and zero expenditure together with the maintained recording convention when $v_t=0$, using independent randomizations across battles. Each contestant participates in only one battle, so conditional on the public history his gain from winning rather than losing is exactly $v_t$; hence his deviation problem is precisely the corresponding component battle. The prescribed action is therefore optimal. Scale invariance makes the induced winning probability $\mu_t$ at every positive spread, and the recording convention gives the same probability at zero, so the continuation probabilities used to compute $v_t$ are consistent with the prescribed strategies. This establishes sequential rationality and consistent beliefs.

\emph{Part 2.} Fix any perfect Bayesian equilibrium and any on-path public history at which battle $t$ is played. Whatever the effective prize spread $v_t$ at that history, team $A$ is recorded as the winner of battle $t$ with probability $\mu_t$. If $v_t>0$, then, since each contestant takes part in a single battle, the game faced by the two contestants in battle $t$ is the component battle with common spread $v_t$; its equilibrium is unique by Lemma~\ref{lem:linear-reduction}, and the winning probability at any positive spread is $\mu_t$. If $v_t=0$, both contestants choose zero expenditure and the recording convention assigns the same probability. The conclusion therefore does not require knowing which of the two cases obtains.

Distinct contestants randomize independently, and types are independent across players and battles, so, conditional on any public history, the outcomes of battles in the same block are independent. Proceeding block by block, the recorded outcome vector is a product of independent Bernoulli distributions with parameters $\mu_t$, in every perfect Bayesian equilibrium and for every disclosure policy and temporal structure in the class. Consequently $Y_t$ and each of its conditional expectations are determined by the primitives, temporal structure, and disclosure policy; under no and full prior-outcome disclosure the effective prize spreads are given by~\eqref{eq:VtN} and~\eqref{eq:VtF}. Given these spreads, play in each battle is pinned down: the unique component equilibrium of Lemma~\ref{lem:linear-reduction} when the spread is positive, and zero expenditure when it is zero. The distribution of output in every battle, and hence expected total output, is therefore the same in every perfect Bayesian equilibrium.

\section{Disclosure and Temporal-Refinement Proofs}\label{app:disclosure-refinement-proofs}

\subsection{Proof of Corollary~\ref{cor:garbling}}\label{app:garbling}

Fix $\mathcal P$ and a committed, public, and common garbling with public information $\mathcal G_\ell=\sigma(S_1,\ldots,S_\ell)$. Because the winning probability in battle $t$ is $\mu_t$ at every public history and every positive spread, and the zero-spread convention preserves the same recorded probability, induction over blocks implies that every admissible garbling generates the same product distribution of battle outcomes. The exogenous signal randomizations are independent of this outcome vector.

Let $\boldsymbol{\xi}_{\ell}=(\xi_1,\ldots,\xi_\ell)$. Every signal observed
before block $\ell$ is measurable with respect to
$\sigma(\mathcal H_\ell^{\mathcal P},\boldsymbol{\xi}_\ell)$, and hence
$\mathcal G_\ell\subseteq
\sigma(\mathcal H_\ell^{\mathcal P},\boldsymbol{\xi}_\ell)$. Independence of
the designer's randomization devices gives
\[
\E\left[Y_t\mid
\mathcal H_{\ell(t)}^{\mathcal P},\boldsymbol{\xi}_{\ell(t)}\right]
=
V_t^{F,\mathcal P}\left(\mathcal H_{\ell(t)}^{\mathcal P}\right).
\]
The tower property therefore gives
\[
V_t^S
=
\E\left[
V_t^{F,\mathcal P}\left(\mathcal H_{\ell(t)}^{\mathcal P}\right)
\mid\mathcal G_{\ell(t)}
\right],
\qquad
V_t^N=\E[V_t^S].
\]
Thus $V_t^S$ lies between no disclosure and full disclosure in the conditional-expectation order. If $\phi_t$ is concave, two applications of Jensen's inequality yield
\[
\phi_t(V_t^N)
\geq
\E[\phi_t(V_t^S)]
\geq
\E\left[\phi_t\left(V_t^{F,\mathcal P}\left(\mathcal H_{\ell(t)}^{\mathcal P}\right)\right)\right].
\]
Summing over battles proves part 1. Convexity reverses both inequalities, and linearity gives equality.

\subsection{Proof of Proposition~\ref{prop:expenditure-neutrality}}\label{app:expenditure}

In each component battle, equilibrium ability-scaled expenditure scales proportionally with the effective prize spread. In particular, under the finite-support specification the equilibrium distribution of $y/v$ is independent of $v$ (see the proof of Lemma~\ref{lem:linear-reduction}). Because expenditure satisfies $e=y/\alpha$, expenditure scales proportionally with $v$ for every type. Moreover, battle-$t$ types are drawn independently of the public history that determines the effective prize spread. Hence, conditional on any spread $v>0$, expected aggregate expenditure in battle $t$ is $c_tv$, where by the second expression in~\eqref{eq:linear-expenditure}
\[
c_t=\sum_{k=1}^{m_t}\frac{p_k^{A(t)}}{\alpha_k^{A(t)}}\left(I_k^t+I_{k+1}^t\right)\geq0
\]
depends only on the primitives of battle $t$; at $v=0$, equilibrium expenditure is zero.

Let $V_t$ denote the effective prize spread induced by the design. Because expected expenditure in battle $t$ is linear in $V_t$, the law of iterated expectations gives
\[
\E[c_tV_t]=c_t\E[Y_t].
\]
Changing the disclosure policy or temporal structure may therefore raise $V_t$ after some histories and lower it after others, but neither kind of change affects its unconditional mean, and the changes exactly offset in expectation. By Proposition~\ref{prop:pbe-assembly}, the distribution of the outcome vector, and hence
\[
\E[Y_t]=\pi_t+V\Theta_A(n;\mathcal T\setminus\{t\}),
\]
is the same for every disclosure policy in the class and every ordered temporal structure. Summing over battles gives the stated value, which depends only on the primitives of the component battles.

Unlike Theorem~\ref{thm:information-refinement}, this argument uses only
unconditional means and therefore applies to nonnested temporal structures
as well. It establishes no convex-order comparison between them.

\section{Continuous-Type Extension}\label{app:continuous-type-extension}

\subsection{Proof of Proposition~\ref{prop:continuous-types}}\label{app:continuous-types}

Fix battle $t$ and suppress its subscript. Differentiating the defining equation for $G$ gives
\[
G'(\theta)
=
\frac{\alpha^B(G(\theta))}{\alpha^A(\theta)},
\]
so
\[
K'(\theta)
=
\alpha^A(\theta)G'(\theta)
=
\alpha^B(G(\theta)).
\]
For $v>0$, consider the strategies
\[
y^A(\theta)=vK(\theta),
\qquad
y^B(G(\theta))=vK(\theta),
\]
with types $\eta<\underline\theta$ of player $B$ choosing zero.

By the same strategic transformation used in
Lemma~\ref{lem:linear-reduction}, a player of ability $\alpha$ pays
$y/\alpha$. If type $\theta$ of player $A$ chooses $vK(r)$, it wins with
probability $G(r)$ and obtains
\[
U^A(r\mid\theta)
=
vG(r)-\frac{vK(r)}{\alpha^A(\theta)}.
\]
Hence
\[
\frac{\partial U^A(r\mid\theta)}{\partial r}
=
vG'(r)
\left(1-\frac{\alpha^A(r)}{\alpha^A(\theta)}\right),
\]
which is positive for $r<\theta$ and negative for $r>\theta$. Similarly, an active type $G(\theta)$ of player $B$ choosing $vK(r)$ obtains
\[
U^B(r\mid\theta)
=
vr-\frac{vK(r)}{\alpha^B(G(\theta))},
\]
with derivative
\[
\frac{\partial U^B(r\mid\theta)}{\partial r}
=
v\left(1-
\frac{\alpha^B(G(r))}{\alpha^B(G(\theta))}
\right),
\]
so $r=\theta$ is optimal. For $\eta<\underline\theta$, the payoff from $vK(r)$ has derivative
\[
v\left(1-
\frac{\alpha^B(G(r))}{\alpha^B(\eta)}
\right)<0,
\]
so it is strictly decreasing in $r$. Since this payoff converges to zero
as $r\downarrow0$ and choosing $y=0$ yields zero, zero is optimal. No type benefits from choosing above the common support. This establishes the proposed monotone equilibrium.

For uniqueness, \citet{amann-leininger-1996} establish that any
equilibrium in this atomless two-player environment has monotone
representatives and the standard common-support properties. Thus, let
$(\tilde y^A,\tilde y^B)$ represent any equilibrium, with induced
distributions $H^A$ and $H^B$ of ability-scaled expenditure. The two
distributions have a common support $[0,\bar y]$ without gaps, neither
has an atom at any $y>0$, and at most one has an atom at zero.
Monotonicity therefore defines a strictly increasing map $\tilde G$
matching the type of player $A$ choosing $y$ to the type of player $B$
choosing the same $y$, with $\tilde G(1)=1$ because the two highest
types choose $\bar y$. Type $\theta$ of player $A$ then wins with
probability $\tilde G(\theta)$, and type $\tilde G(\theta)$ of player
$B$ wins with probability $\theta$, so optimality of the assigned
choices requires
\[
(\tilde y^A)'(\theta)=v\alpha^A(\theta)\tilde G'(\theta)
\qquad\text{and}\qquad
(\tilde y^A)'(\theta)=v\alpha^B(\tilde G(\theta))
\]
at almost every $\theta$. Hence $\alpha^A(\theta)\tilde G'(\theta)=\alpha^B(\tilde G(\theta))$ with $\tilde G(1)=1$, which is~\eqref{eq:continuous-matching}. Because $\alpha^B$ is continuously differentiable and $\alpha^A$ is bounded away from zero, the right-hand side of this differential equation is Lipschitz in $\tilde G$, so the terminal value problem has a unique solution. Therefore $\tilde G=G$ and, integrating from $\tilde y^A(0)=0$, $\tilde y^A=vK$ almost everywhere.

Player $A$'s expected output is
\[
\int_0^1\gamma(vK(\theta))\,d\theta.
\]
For player $B$, changing variables $\eta=G(\theta)$ gives
\[
\int_{\underline\theta}^1
\gamma\!\left(vK(G^{-1}(\eta))\right)d\eta
=
\int_0^1\gamma(vK(\theta))G'(\theta)\,d\theta.
\]
Adding the two expressions proves~\eqref{eq:continuous-phi}. Moreover,
\[
\mu^{\mathrm{ct}}
=
\int_0^1G(\theta)\,d\theta,
\]
which is independent of $v>0$ and, since $G$ is strictly increasing with $G(1)=1$, lies in $(0,1)$. Because~\eqref{eq:continuous-phi} is a positive weighted integral of $v\mapsto\gamma(vK(\theta))$, $\phi^{\mathrm{ct}}$ has the same curvature as $\gamma$. At $v=0$, zero expenditure is uniquely optimal, so $\phi^{\mathrm{ct}}(0)=0$.

\subsection{Proof of Corollary~\ref{cor:continuous-type-extension}}\label{app:cor:continuous-type-extension}

Proposition~\ref{prop:continuous-types} supplies exactly the component-battle properties used in the finite-support analysis: $\phi_t^{\mathrm{ct}}(0)=0$, the positive-spread winning probability $\mu_t^{\mathrm{ct}}$ is level-invariant, the output response has the same curvature as $\gamma$, and ability-scaled expenditure is proportional to the effective prize spread. The last property supplies the same common normalized-expenditure draw used in Theorem~\ref{thm:information-refinement}, so its player-level martingale coupling and convex-order conclusion continue to hold. With independent type draws and the maintained zero-spread recording convention, the proofs of Proposition~\ref{prop:pbe-assembly}, Proposition~\ref{prop:disclosure-ranking}, Corollary~\ref{cor:garbling}, and Proposition~\ref{prop:temporal-refinement} therefore apply after replacing $(\phi_t,\mu_t)$ by $(\phi_t^{\mathrm{ct}},\mu_t^{\mathrm{ct}})$. The conclusion of Corollary~\ref{cor:unrestricted-clustering} follows from the same argument.

For expenditure neutrality, proportionality implies that conditional expected aggregate expenditure in battle $t$ is $c_t^{\mathrm{ct}}v$ for some finite constant $c_t^{\mathrm{ct}}\geq0$ depending only on that battle's primitives. Hence, for the induced spread $V_t$,
\[
\E[c_t^{\mathrm{ct}}V_t]
=
c_t^{\mathrm{ct}}\E[Y_t]
\]
by iterated expectations. Since the outcome distribution, and therefore $\E[Y_t]$, is invariant across the stated disclosure policies and temporal structures, expected aggregate expenditure is invariant as well.

\section{Type-Observability Proofs}\label{app:observable-types}

We first characterize a component battle conditional on realized abilities, and then apply the same conditional-expectation argument under local type observability and global type revelation.

\subsection{Observable opponent types within component battles}\label{subsec:complete-one-battle}

Fix battle $t$ and let
\[
\boldsymbol\alpha_t=(\alpha^{A(t)},\alpha^{B(t)}),
\qquad
\underline\alpha_t=\min\{\alpha^{A(t)},\alpha^{B(t)}\},
\qquad
\overline\alpha_t=\max\{\alpha^{A(t)},\alpha^{B(t)}\}.
\]
Conditional on $\boldsymbol\alpha_t$, the battle is the two-player all-pay auction with common cost function $\gamma^{-1}$ and values $\alpha^{A(t)}v$ and $\alpha^{B(t)}v$, whose equilibrium is characterized by \citet{kaplan-wettstein-2006}.

\begin{lemma}[Complete-information battle]\label{lem:complete-allpay}
For $v>0$, conditional on $\boldsymbol\alpha_t$, expected total output in the complete-information component battle is
\begin{equation}
\psi_t(v\mid\boldsymbol\alpha_t)
=
\underline\alpha_t
\left(\frac{1}{\alpha^{A(t)}}+\frac{1}{\alpha^{B(t)}}\right)
\int_0^1\gamma(\underline\alpha_t vz)\,dz,
\label{eq:psi-realized-change}
\end{equation}
and team $A$ wins with probability
\[
q_t^C(\boldsymbol\alpha_t)
=
\begin{cases}
1-\dfrac{\alpha^{B(t)}}{2\alpha^{A(t)}},
& \alpha^{A(t)}\geq\alpha^{B(t)},\\[6pt]
\dfrac{\alpha^{A(t)}}{2\alpha^{B(t)}},
& \alpha^{A(t)}<\alpha^{B(t)}.
\end{cases}
\]
In either case $q_t^C\in(0,1)$, and the winning probability is independent of $v$. At $v=0$, zero expenditure is the unique equilibrium and $\psi_t(0\mid\boldsymbol\alpha_t)=0$.
\end{lemma}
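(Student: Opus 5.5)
The plan is to reduce the conditional battle to the linear benchmark and then read off output and the winning probability from the known complete-information equilibrium. Conditional on $\boldsymbol\alpha_t$ and $v>0$, apply the strategic transformation from the proof of Lemma~\ref{lem:linear-reduction}. Write $y=\alpha e=\gamma^{-1}(x)$ and multiply each player's payoff by his ability. The battle then becomes a complete-information all-pay auction with linear cost $y$ and values $a=\alpha^{A(t)}v$ and $b=\alpha^{B(t)}v$, and this is independent of $\gamma$.

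Without loss of generality let $\alpha^{A(t)}\geq\alpha^{B(t)}$; the other case follows by relabeling. The Baye--Kovenock--de Vries / \citet{kaplan-wettstein-2006} equilibrium is unique, and I will use its standard form:
\begin{itemize}
\item the strong player mixes uniformly on $[0,\underline\alpha_t v]$;
\item the weak player places an atom of size $1-\underline\alpha_t/\overline\alpha_t$ at zero and mixes uniformly, with density $1/(\overline\alpha_t v)$, on the same interval.
\end{itemize}
I will verify indifference directly. Against the weak player's distribution $H^B(y)=1-\underline\alpha_t/\overline\alpha_t+y/(\overline\alpha_t v)$, the strong player's payoff is $\overline\alpha_t vH^B(y)-y$, which is constant in $y$. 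Against the uniform $H^A(y)=y/(\underline\alpha_t v)$, the weak player's payoff is $\underline\alpha_t vH^A(y)-y=0$. Uniqueness is the standard two-player complete-information result, which I will cite rather than reprove.

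Both formulas then follow from these distributions.
\begin{itemize}
\item \emph{Output.} The strong player contributes $\int_0^{\underline\alpha_t v}\gamma(y)\,dy/(\underline\alpha_t v)$. The weak player contributes $\int_0^{\underline\alpha_t v}\gamma(y)\,dy/(\overline\alpha_t v)$, since the atom at zero yields $\gamma(0)=0$. Substituting $y=\underline\alpha_t vz$ turns the sum into $\underline\alpha_t(1/\underline\alpha_t+1/\overline\alpha_t)\int_0^1\gamma(\underline\alpha_t vz)\,dz$, which is \eqref{eq:psi-realized-change} because $\{\underline\alpha_t,\overline\alpha_t\}=\{\alpha^{A(t)},\alpha^{B(t)}\}$.
\item \emph{Winning probability.} The weak player wins with probability $\int H^A\,dH^B$ over the continuous part, ties at zero having probability zero because the strong player has no atom. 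This equals $\int_0^{\underline\alpha_t v}\frac{y}{\underline\alpha_t v}\frac{dy}{\overline\alpha_t v}=\underline\alpha_t/(2\overline\alpha_t)$. That gives both cases of $q_t^C$, shows it lies in $(0,1)$ since abilities are positive, and shows it is $v$-free because $v$ cancels. This matches Corollary~1 of \citet{kaplan-wettstein-2006}, as already used in Section~\ref{sec:example}.
\end{itemize}

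At $v=0$, any positive expenditure gives a strictly negative payoff while zero gives zero. Zero expenditure is therefore strictly dominant, the equilibrium is unique, and $\psi_t(0\mid\boldsymbol\alpha_t)=0$.

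The only delicate point is the citation for uniqueness and the handling of the tie when $\alpha^{A(t)}=\alpha^{B(t)}$. In that case the atom vanishes, both players are uniform, and the winning probability is $1/2$, which is consistent with both branches of the formula. The rest is a routine calculation.
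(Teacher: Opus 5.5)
Your proposal is correct and follows essentially the same route as the paper: reduce the battle to the linear-cost all-pay auction with values $\alpha^{A(t)}v$ and $\alpha^{B(t)}v$, use the standard Baye--Kovenock--de Vries equilibrium, and integrate $\gamma$ and the opponent's distribution function against those equilibrium distributions to obtain the output formula and the winning probability. The differences are cosmetic. You verify indifference explicitly, you compute the weak player's winning probability rather than the strong player's, and you spell out the $v=0$ dominance argument that the paper calls immediate.
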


\begin{proof}
Assume without loss of generality that $\alpha^{A(t)}\geq\alpha^{B(t)}$ and suppress $t$. By Lemma~\ref{lem:linear-reduction}, the battle is strategically equivalent to the linear-cost all-pay auction with values $\alpha^Av\geq\alpha^Bv$, whose standard equilibrium \citep{baye-kovenock-de-vries-1996} has common support $[0,\alpha^Bv]$, with
\[
F^A(y)=\frac{y}{\alpha^Bv},
\qquad
F^B(y)=1-\frac{\alpha^B}{\alpha^A}+\frac{y}{\alpha^Av}.
\]
Using $x=\gamma(y)$, expected total output is
\[
\frac1v\left(\frac1{\alpha^A}+\frac1{\alpha^B}\right)
\int_0^{\alpha^Bv}\gamma(y)\,dy,
\]
which becomes~\eqref{eq:psi-realized-change} after the change of variables $y=\alpha^Bvz$. The winning probability of player $A$ is
\[
\int_0^{\alpha^Bv}F^B(y)\,dF^A(y)
=
1-\frac{\alpha^B}{2\alpha^A}.
\]
The opposite ability ranking is symmetric. The statement for $v=0$ is immediate.
\end{proof}

Under local type observability, define
\[
\phi_t^C(v)=\E_{\boldsymbol\alpha_t}[\psi_t(v\mid\boldsymbol\alpha_t)],
\qquad
\mu_t^C=\E_{\boldsymbol\alpha_t}[q_t^C(\boldsymbol\alpha_t)].
\]
Equation~\eqref{eq:psi-realized-change} implies that $\phi_t^C$ has the same curvature as $\gamma$, while $\mu_t^C$ is independent of every positive spread. Replacing $\mu_s$ by $\mu_s^C$ in the pivotality probabilities defines the corresponding spreads $V_t^{C,N}$ and $V_t^{C,F,\mathcal P}$. Iterated expectations give
\[
V_t^{C,N}
=
\E\left[V_t^{C,F,\mathcal P}
\left(\mathcal H_{\ell(t)}^{\mathcal P}\right)\right].
\]
Jensen's inequality therefore yields the fixed-structure disclosure ranking in Corollary~\ref{cor:type-observability} for local type observability. The restriction $\pi_t>0$ ensures that all relevant spreads are positive.

\subsection{Global revelation of type realizations}\label{subsec:global-type-app}

Let $\boldsymbol\alpha=(\boldsymbol\alpha_s)_{s\in\mathcal T}$ denote the publicly observed type profile. Conditional on $\boldsymbol\alpha$, battle outcomes are independent with winning probabilities $q_s^C(\boldsymbol\alpha_s)$. Define
\[
V_t^{G,N}(\boldsymbol\alpha)
=
\E[Y_t\mid\boldsymbol\alpha],
\qquad
V_t^{G,F,\mathcal P}
(\boldsymbol\alpha,\mathcal H_{\ell(t)}^{\mathcal P})
=
\E[Y_t\mid\boldsymbol\alpha,
\mathcal H_{\ell(t)}^{\mathcal P}].
\]
The tower property gives
\[
V_t^{G,N}(\boldsymbol\alpha)
=
\E\left[
V_t^{G,F,\mathcal P}
(\boldsymbol\alpha,\mathcal H_{\ell(t)}^{\mathcal P})
\mid\boldsymbol\alpha
\right].
\]
Conditional expected output is evaluated by $\psi_t(\cdot\mid\boldsymbol\alpha_t)$, which has the same curvature as $\gamma$. Conditional Jensen's inequality therefore gives the fixed-structure disclosure ranking in Corollary~\ref{cor:type-observability} for global type revelation.

\subsection{Expenditure under alternative type observability}

The complete-information equilibrium used above implies that, conditional on $\boldsymbol\alpha_t$ and $v>0$,
\[
\E[e^{A(t)}+e^{B(t)}\mid\boldsymbol\alpha_t,v]
=
\kappa_t(\boldsymbol\alpha_t)v,
\qquad
\kappa_t(\boldsymbol\alpha_t)
=
\frac{\underline\alpha_t}{\overline\alpha_t}.
\]
In particular $\kappa_t$ depends on the realized ability pair alone and not on $\gamma$, which is the component-level form of the cost-function invariance of expected expenditure established by \citet{kaplan-wettstein-2006}. Under local type observability, $\kappa_t(\boldsymbol\alpha_t)$ is independent of the history-generated effective prize spread, so
\[
\E\left[
\kappa_t(\boldsymbol\alpha_t)
V_t^{C,F,\mathcal P}
\right]
=
\E[\kappa_t(\boldsymbol\alpha_t)]V_t^{C,N}.
\]
Under global type revelation, $\kappa_t(\boldsymbol\alpha_t)$ is measurable with respect to $\boldsymbol\alpha$, and the preceding conditional-expectation identity gives
\[
\E\left[
\kappa_t(\boldsymbol\alpha_t)
V_t^{G,F,\mathcal P}
\right]
=
\E\left[
\kappa_t(\boldsymbol\alpha_t)
V_t^{G,N}(\boldsymbol\alpha)
\right].
\]
Thus expected expenditure in every battle is the same under no and full prior-outcome disclosure in either observability regime. Summing over battles completes the proof of Corollary~\ref{cor:type-observability}.

\begingroup
\renewcommand{\baselinestretch}{1.24}\normalsize
\setlength{\bibsep}{0pt}
\widowpenalty=10000
\clubpenalty=10000

\endgroup

\end{document}